\documentclass[journal]{IEEEtran}
\usepackage[top=1.5cm, bottom=1.5cm, left=1.5cm, right=1.5cm]{geometry}
\usepackage{amsmath,amsfonts}
\usepackage{array}
\usepackage[caption=false,font=normalsize,labelfont=sf,textfont=sf]{subfig}
\usepackage{textcomp}
\usepackage{stfloats}
\usepackage{url}
\usepackage{verbatim}
\usepackage{graphicx}
\usepackage{cite}
\usepackage{flushend}
\usepackage{float}
\usepackage{hyperref}
\usepackage{titlesec}
\titlespacing*{\section} {0pt}{2ex}{2ex}
\usepackage{tipa}
\usepackage{textgreek}
\usepackage{amsmath, nccmath} 
\usepackage{mathtools}  
\newtheorem{theorem}{Theorem}
\usepackage[linesnumbered,ruled,vlined]{algorithm2e}

\SetCommentSty{mycommfont}
\SetKwInput{KwInput}{Input}                
\SetKwInput{KwOutput}{Output}              
\usepackage[table,xcdraw]{xcolor}
\usepackage{amssymb}

\begin{document}

\pagenumbering{gobble}
\title{Sparse Channel Estimation and Signal Recovery for Reduced-PAPR Visible Light Optical OFDM Systems Relying on Bayesian Learning}

\author{Shubham~Saxena,~\IEEEmembership{Graduate Student Member,~IEEE,} Suraj~Srivastava,~\IEEEmembership{Member,~IEEE,} Aditya~K.~Jagannatham,~\IEEEmembership{Senior Member,~IEEE}, and Lajos~Hanzo,~\IEEEmembership{Life Fellow, IEEE} \vspace{-5mm} 
\thanks{Shubham Saxena and Aditya K. Jagannatham are with 
the Department of Electrical Engineering, Indian Institute of Technology 
Kanpur, Kanpur-$208016$, India (e-mail: shubs20@iitk.ac.in; adityaj@iitk.ac.in). Suraj Srivastava is with the 
Department of Electrical Engineering, Indian Institute of Technology 
Jodhpur, Jodhpur, Rajasthan $342030$, India (email: 
surajsri@iitj.ac.in). L. Hanzo is with the School of Electronics and 
Computer Science, University of Southampton, Southampton SO17 1BJ, U.K. 
(email: lh@ecs.soton.ac.uk)}}
\maketitle
\markboth{}{Saxena \MakeLowercase{\textit{et al.}}: Sparse Channel Estimation and Signal Recovery for Reduced-PAPR Visible Light Optical OFDM Systems Relying on Bayesian Learning}
\begin{abstract}
A sophisticated multipath channel impulse response (CIR) estimator is proposed, followed by sparse frequency-domain (FD) signal detection, which harnesses the simultaneous-sparsity innate in the multipath CIR and FD signals across measurement vectors, conceived for optical orthogonal frequency division multiplexing (O-OFDM) utilized for visible light communication (VLC) systems having reduced peak-to-average power ratio (PAPR). At the outset, we derive the input-output relationships for the asymmetrically clipped optical OFDM (ACO-OFDM) as well as for the direct current-biased optical OFDM (DCO-OFDM) systems. Next, traditional benchmarking methods are introduced for channel estimation (CE), including both the linear minimum mean square error (LMMSE) and least square (LS) methods, followed by an orthogonal matching pursuit (OMP)-based technique capable of exploiting the sparsity in the multipath CIR of the VLC system. Subsequently, an estimation technique based on a novel group-sparse OMP (GOMP) concept is proposed, which capitalizes on the group-sparsity in the delay domain of the CIR across measurement vectors, attributed to the multipath characteristics inherent in the non-line-of-sight (NLoS) components of the VLC channel model. Additionally, an advanced group-sparse multipath CIR estimation technique is put forth based on the group-sparse Bayesian learning (GBL) approach, which considerably reduces the pilot overhead. A low complexity version of GBL, termed LCGBL, is also developed that reduces the computational cost of GBL significantly. Consequently, the GOMP and GBL frameworks are also extended to the data detection of the sparse FD O-OFDM symbols, which utilizes the group-sparsity of the FD O-OFDM symbols across measurement vectors. The Bayesian Cramer-Rao lower bound (BCRLB) is computed to assess the estimation performance of the proposed CE techniques. Our simulations demonstrate that, despite its reduced pilot overhead, the proposed GBL technique outperforms the other conventional and sparse CE schemes in terms of its bit error-rate (BER), outage probability (OP), and normalized mean-square-error (NMSE), quantitatively reinforcing its superiority.
\end{abstract}
\begin{IEEEkeywords}
Bayesian learning (BL), group-sparse, multiple measurement
vectors (MMV), BCRLB, expectation maximization, channel estimation (CE),  signal recovery (SR), visible light communication.
\end{IEEEkeywords}
\section{Introduction}
\IEEEPARstart{T}{he} rapid expansion of high-speed mobile data traffic, fueled by the proliferation of the Internet of Things (IoT) and mobile devices like smartphones, poses a significant challenge in terms of radio frequency (RF) spectrum scarcity, culminating in a spectrum crunch. In order to quell this RF spectrum congestion, next-generation wireless systems may also harness visible light communication (VLC) \cite{babar2019near, sun2025joint}. Since rich spectral resources may be found in the VLC domain, the availability of low-cost devices, such as photodiodes (PDs) and light-emitting diodes (LEDs), also improves the appeal of VLC systems by providing simultaneous `communication' and `illumination'. To elaborate, the VLC band includes an enormous block of unregulated spectrum that is approximately $10~000$ times that of the RF band, along with low radiation, and robust resilience to electromagnetic interference. Additionally, VLC systems provide intrinsic security against eavesdropping, since the light is blocked by walls, thereby securing the transmitted data. Furthermore, VLC is deemed to be an energy and cost-efficient technique as it reuses the existing lighting resources for data transmission, and it is thus a popular `green' technology \cite{babar2019near,sun2025joint,wang2017learning}. Thus, due to the aforementioned factors, VLC may be viewed as a compelling complement to traditional RF communication. 

The modulation used in VLC is typically intensity modulation/direct detection (IM/DD), which relies on real-valued and non-negative signals \cite{wang2017learning,ghassemlooy2019optical}. In IM/DD modulation, at the transmitter side, the light intensity of the LED is modulated by the data to be transmitted, leading to the generation of an optical signal, while at the receiver side, the PD transforms the optical signal back to an electrical signal. The popular modulation techniques harnessed in VLC systems are pulse-width modulation, on-off keying, and $M$-ary pulse-amplitude modulation, which utilize a single carrier (SC). Nevertheless, at higher data rates, these techniques necessitate a sophisticated receiver relying on multitap equalization \cite{ghassemlooy2019optical,hei2019energy}, which leads to high complexity and potential performance degradation. 
Generally, the VLC channel has both line-of-sight (LoS) and non-LoS (NLoS) components. The NLoS paths are established when the light emitted from the LED source is reflected by various surfaces, such as walls and objects, before reaching the PD at the receiver. By contrast, the LoS path represents the shortest possible path to the PD receiver from the LED source, unimpeded by reflections or obstructions. Thus, the received signal exhibits time dispersion because of multipath propagation. The power delay characteristic of the VLC channel is examined in \cite{lee2011indoor}. In VLC systems, the ensuing delay spread causes intersymbol interference (ISI) \cite{219552}. Consequently, optical orthogonal frequency division multiplexing (O-OFDM) has garnered substantial interest from researchers because of its efficacy in offering a low-complexity single-tap FD equalization, enhanced ISI reduction, and improved resilience to multi-path propagation, in LED-based VLC systems \cite{popoola2014pilot}. 

Direct current-biased optical OFDM (DCO-OFDM) and asymmetrically clipped optical OFDM (ACO-OFDM) are the most dominant variants of O-OFDM modulation \cite{zhang2018multi}. Since VLC signals are naturally non-negative and real-valued, a DC bias is applied to the transmitted signal in the DCO-OFDM to guarantee unipolarity. Likewise, the ACO-OFDM achieves unipolarity by trimming the transmitted bipolar O-OFDM signals at zero, transmitting only the positive components. In comparison to DCO-OFDM, this characteristic makes ACO-OFDM more energy-efficient \cite{zhang2018multi}, while the former exhibits a superior spectral efficiency when compared to the latter. This advantage stems from the operational difference between ACO-OFDM and its DCO-OFDM counterpart: ACO-OFDM utilizes half of the odd subcarriers for data transmission, while DCO-OFDM employs only half of the subcarriers for transmitting the data symbols \cite{zhang2018multi}. Consequently, one must strike a balance between spectral efficiency and energy efficiency when choosing between the two O-OFDM techniques. More advanced O-OFDM techniques include pulse amplitude modulated discrete multitone OFDM \cite{ghassemlooy2019optical}, flip OFDM \cite{ghassemlooy2019optical}, asymmetrically clipped DCO-OFDM \cite{vappangi2018performance}, etc, which are combinations of ACO-OFDM modulation and DCO-OFDM modulation. However, using advanced O-OFDM techniques renders the transceiver design more challenging \cite{vappangi2018performance}. Therefore, due to the potential advantages of these O-OFDM approaches, both ACO-OFDM as well as DCO-OFDM principles are considered in this work. Although O-OFDM offers several benefits for VLC systems, it also presents significant challenges that require special attention. A major impediment of O-OFDM is its excessive peak-to-average power ratio (PAPR) resulting from the superimposition of multiple subcarriers \cite{popoola2014pilot}. This is exacerbated by the restricted dynamic range of the LED, which is the central cause of non-linearity in a VLC system \cite{vappangi2018performance}. Therefore, the best strategy to eliminate these non-linear imperfections is to linearize the non-linear region using pre-distortion methods or focus primarily on PAPR reduction methods. In addition, conventional PAPR reduction methods cannot be directly utilized in a VLC system due to the transmission of real and non-negative signals \cite{vappangi2018performance}. Another impediment in an O-OFDM-VLC system is the performance degradation arising due to interference from the background noise, natural light source, and the multipath VLC effect of the channel. Naturally, an O-OFDM-VLC receiver requires both equalization and channel estimation (CE). Thus, to address the challenges associated with high PAPR and CE in the VLC system, we propose an efficient CE method specifically designed for a reduced-PAPR O-OFDM-VLC system. The literature pertaining to this context is highlighted in the following subsection.
\begin{table*}
\label{tab:my-table}
\begin{center}
\caption{Boldly contrasting our contributions to the literature}
\resizebox{!}{!}{%
\begin{tabular}{|l|c|c|c|c|c|c|c|c|c|c|c|c|}
\hline
\textbf{Features}           &  \cite{vappangi2018performance}&\cite{ahmad2020papr}  & \cite{wang2022multi} & \cite{yang2021artificial}  & \cite{sharifi2020compressive}  & \cite{wang2020papr} & \cite{deng2018novel} & \cite{miao2020adaptive} & \cite{chen2016adaptive} & \cite{chen2017sparse} & \cite{shub} & \textbf{Proposed} \\ \hline
LoS channel model           & \checkmark & \checkmark & \checkmark & \checkmark &  & \checkmark & \checkmark & \checkmark & \checkmark & \checkmark  & \checkmark  & \checkmark               \\ \hline
DCO-OFDM                    & \checkmark & &  & \checkmark & \checkmark &  \checkmark & \checkmark & \checkmark & \checkmark & \checkmark & \checkmark    & \checkmark             \\ \hline
NLoS channel model          & \checkmark & \checkmark &  & \checkmark &  &  & & \checkmark & \checkmark & & \checkmark  & \checkmark               \\ \hline
ACO-OFDM                    & \checkmark & \checkmark  & \checkmark &  &  &  \checkmark &  &  & & & \checkmark   & \checkmark              \\ \hline
Sparse CIR                &  &  &  &  &  &  &  &  &  &  & \checkmark      & \checkmark           \\ \hline
\textbf{CS-based reduced-PAPR} &  &  &  &  & \checkmark &   &  &  &  &  & & \checkmark                 \\ \hline
\textbf{MMV} & &  &  &  & \checkmark &   &  &  &  & &  & \checkmark \\ \hline 
\textbf{MMV-based BCRLB}          &      &  &  &  &  &   &  &  &  &   &   & \checkmark               \\ \hline
\textbf{OMP-based CE/SR} &  &  &  &  & SR &   &  & CE &  &  & CE& \textbf{CE, SR}                 \\ \hline
\textbf{BL-based CE/SR} &  &  &  &  &  &  &  & CE & CE &  CE & CE & \textbf{CE, SR}    \\ \hline
\textbf{GOMP-based CE/SR}            &      &  &  &  &  &   &  &  &  &  & & \textbf{CE, SR}               \\ \hline
\textbf{LCGBL-based CE}            &      &  &  &  &  &   &  &  &  &  & & \textbf{CE}               \\ \hline
\textbf{GBL-based CE/SR}  & & &  &  &  &  &  &  & & & & \textbf{CE, SR}               \\ \hline
\end{tabular}%
}
\end{center}
\end{table*}
\subsection{Literature review}
Numerous solutions were offered in the existing body of research to overcome the PAPR problem of OFDM systems. These include signal distortion methods such as (i) companding (ii) clipping and filtering (iii) peak cancellation carrier (iv) peak windowing \cite{sandoval2017hybrid}. These methods reduce the PAPR by altering the O-OFDM signal prior to transmission. In the clipping and filtering-based techniques \cite{sandoval2017hybrid}, high peaks are clipped before transmission, which results in signal distortion. Companding approaches use non-linear transformation of the signal for mitigating the PAPR values \cite{sandoval2017hybrid}. This non-linear operation results in performance degradation as it compromises the orthogonality of the O-OFDM subcarriers. Thus, while the signal distortion techniques are straightforward to apply, they generate clipping distortion, which degrades the symbol error-rate (SER). Other approaches rely on multiple signaling and probabilistic techniques \cite{sandoval2017hybrid}, which produce different O-OFDM signals conveying identical information, with the signal characterized by the lowest PAPR being chosen for transmission. These methods include tone reservation, selected mapping, partial transmit sequence, and pilot-assisted solutions \cite{sandoval2017hybrid}. The limitation of these methods is that they require side information in addition to the data, which increases the computational complexity and reduces the data rate. Precoding is one of the alternative approaches for reducing the PAPR \cite{sandoval2017hybrid}. You and Kahn \cite{you2000average}, suggested a block coding strategy for minimizing the PAPR in an O-OFDM system; nevertheless, this method mandates additional bandwidth as well as imposes a complexity overhead. Zhou and Qiao \cite{zhou2015low1}, recommended employing the discrete Hartley transform to alleviate the PAPR of an ACO-OFDM system. 
Thus, these precoding-based approaches result in bit error-rate (BER) improvement and PAPR reduction, albeit at the cost of reduced data rate and additional memory requirements. 

Recently, compressive sensing (CS) based approaches have been widely used for PAPR reduction. Al-Safadi \textit{et al.} \cite{al2009reducing}, employed reserved frequency-domain (FD) tones and exploited CS for signal recovery (SR).  Ghassemlooy \textit{et al.} \cite{ghassemlooy2017papr}, use an orthogonal matching pursuit (OMP) algorithm for SR and perform PAPR reduction for ACO-OFDM signals by relying on a sophisticated Toeplitz matrix-based technique. Popoola \textit{et al.} \cite{popoola2014pilot} investigate the use of a pilot signal to reduce the PAPR in an O-OFDM intensity-modulated optical wireless communication system, employing a maximum likelihood estimator with an oversampling factor of $4$. Sharifi \textit{et al.} \cite{sharifi2020compressive} study a CS and Zadoff-Chu (ZC) sequence-based method for compressing time-domain (TD) DCO-OFDM signals at the transmitter, using the multiple measurement vector (MMV) approach and OMP algorithm for SR. Their simulations, performed with quadrature amplitude modulation
(QAM) modulation schemes and various compression factors, show that increasing the compression factor reduces PAPR. \textcolor{black}{Chen \textit{et al.} \cite{chen2020comparison} examined several widely used precoding techniques for PAPR reduction and reported that ZC precoding offers superior PAPR performance compared with competing alternatives. Similarly, Ma \textit{et al.} \cite{ma2019performance} investigated a constant amplitude zero autocorrelation sequence (CAZAC) precoding framework derived from the ZC structure and demonstrated strong PAPR reduction capability. Hu \textit{et al.} \cite{hu2012peak} further showed that pilot sequences derived from subsampled ZC sequences can improve PAPR performance in pilot-aided OFDM systems. In the VLC context, Guo \textit{et al.} \cite{guo2020experimental,guo2018experimental} reported that ZC precoding outperforms orthogonal circulant matrix transform (OCT), Walsh-Hadamard transform (WHT), discrete-Hartley transform (DHT), and discrete cosine transform (DCT)-based precoding approaches for PAPR reduction. Moreover, Baig \textit{et al.} \cite{baig2011zcmt,baig2010new,baig2010papr,baig2013zcmt} demonstrated the superiority of ZC-matrix-transform-based precoding over conventional transform precoders such as WHT, DHT, and DCT for reducing PAPR.} Azarnia \textit{et al.} \cite{azarnia2022performance} introduced a CS-based method to reduce PAPR in OFDM systems. Their scheme employs a mapping matrix for signal compression and the group least absolute shrinkage and selection operator (G-LASSO) algorithm for recovery, with performance evaluated through PAPR and BER metrics at an oversampling factor of $4$. Compared to conventional methods, the CS approach offers ease of implementation, independence from side information, and a low computational complexity \cite{sharifi2020compressive}. Consequently, we also conceive a CS-based method for PAPR reduction.

For CE and equalization in the O-OFDM-VLC systems, Zhao \textit{et al.} \cite{zhao2013channel} employ established techniques, specifically the linear minimum mean square error (LMMSE) and least square (LS) methods for a VLC channel impulse response (CIR) of $L_h = 6$ taps. While the LS method is straightforward to implement, it exhibits sensitivity to noise \cite{du2016channel}. Compared to the LS, the LMMSE algorithm provides improved channel state information (CSI), although it is slightly more complex \cite{du2016channel}. For high-rate systems, Zhou \textit{et al.} \cite{zhou2014impact} investigated higher-order reflections of the multipath VLC systems. The study suggests that traditional VLC channel models, which solely account for initial reflections, may not be accurate enough for high-rate VLC systems. Gong \textit{et al.} \cite{gong2015channel}, use maximum likelihood sequence detection and LMMSE for ISI removal and develop an LS-based CE scheme, modeling the VLC channel as a linear time-invariant (LTI) system and Poisson distributed with only NLoS components. 
\textcolor{black}{The authors of \cite{lee2019deep} employed a deep neural network (DNN)-based framework for
dimmable VLC systems, where the encoder and decoder pair is replaced by a DNN-based VLC transceiver. However, their simulations were limited to on-off-keying
modulation, and the approach generally demands large training data and lacks interpretability.} Zhang \textit{et al.} in \cite{zhang2016enhancing} consider the ACO-OFDM and use OMP as well as LS together with the discrete Fourier transform (DFT) for performing CE in the VLC system. In reality, a typical VLC channel comprises a LoS and several NLoS components, which make this system more complex \cite{chen2016adaptive,shub, carruthers2002iterative, barry1993simulation}. The multipath characteristics of the NLoS components result in sparsity in the VLC channel within the delay-domain \cite{shub,zhao2014compressed,du2016channel,shi2020adaptive,zhang2016enhancing}. As for CE in a sparse VLC system, Du \textit{et al.} \cite{du2016channel} developed a CS-based approach that relies on a dynamic over-complete dictionary matrix. The key shortcoming of the approach discussed in \cite{du2016channel} is that it has a high computational complexity arising due to having an over-complete dictionary matrix, which leads to its slow estimation speed. In their seminal paper Shi \textit{et al.} \cite{shi2020adaptive}, proposed a sparse LS method for CE. The key assumption in \cite{shi2020adaptive}, which forms the foundation of their approach, concerns modeling the VLC CIR components as zero-mean Gaussian random variables. Needless to say, this is not guaranteed in a practical VLC system, which can potentially lead to poor performance. Chen \textit{et al.} \cite{chen2017sparse} presented a sparse Bayesian learning (BL) approach relying on a relevance vector machine for CE in O-OFDM-VLC systems. Their method focuses solely on the LoS channel gain and utilizes complex-valued training symbols. However, it is essential to note that their findings are exclusively evaluated for DCO-OFDM systems and assume the imaginary and real components of the transmitted pilot symbols to be identical. This limitation constrains the broader generality of their research. Compared to the standard LS and LMMSE techniques, CS-based CE approaches substantially enhance the estimation accuracy \cite{zhao2014compressed}. Meanwhile, the CS-based techniques presented in the VLC literature require insights into the sparsity characteristics of the channel and rely on the choice of the stopping parameter and dictionary/measurement matrix, with modest deviations causing serious convergence errors and performance loss \cite{srivastava2021bayesian}. 

Our previous work \cite{shub}, introduced a sparse CE technique based on BL for DCO-OFDM and ACO-OFDM VLC systems. However, that framework was limited to a single measurement vector (SMV) model. Several studies have shown that the MMV model leads to performance improvement over the SMV model, particularly when the sparsity structures of the resultant vectors are similar. The MMV method offers the benefit of achieving sparser solutions compared to the SMV approach \cite{srivastava2021bayesian,srivastava2021sparse}. Furthermore, the analysis pertaining to PAPR reduction in the O-OFDM system is not included in \cite{shub}. Thus, to address the limitations of the existing CE techniques therein and also to address the inevitable problem of high PAPR in an O-OFDM system, we conceive a novel Group sparse Bayesian Learning (GBL)-aided simultaneous group sparse multipath CIR estimation scheme. This approach is also extended toward SR of the reduced-PAPR FD O-OFDM symbols in the VLC system. Our results demonstrate that the GBL-based technique conceived is highly effective in contrast to the existing CE and SR strategies, such as FOCal Underdetermined System Solver (FOCUSS) and OMP. Table \ref{tab:my-table} contrasts our contributions to the literature. Below is a summary of the main contributions of this work.
\subsection{Contributions} 
\begin{enumerate}

    \item A CS framework is developed to effectively exploit the inherent sparsity of the CIR in a reduced-PAPR wideband VLC system, while accounting for specular and diffusive reflections as well as the characteristics of the optical receiver, source, and reflectors. Within this framework, an MMV-based sparse multipath CIR model and a sparse over-sampled reduced-PAPR FD OFDM symbol model are formulated to achieve improved estimation and recovery performance compared to the conventional SMV-based model.

    \item The intrinsic sparsity of the delay-domain CIR in VLC systems is further utilized along with the group-sparsity of the multipath CIR across MMV. Based on this structure, an OMP-based group-sparse CIR recovery technique, referred to as GOMP, is proposed for CE in O-OFDM-VLC systems. This approach is specifically designed to exploit the common sparse support shared across measurement vectors, thereby improving the reliability of CIR estimation. 

    \item Furthermore, a novel BL-based framework is introduced for multipath CIR estimation in O-OFDM-VLC systems by exploiting the group-sparsity of the multipath CIR across measurement vectors. Unlike conventional CE schemes, the proposed GBL method estimates the multipath CIR vectors using only a limited number of pilot subcarriers. Consequently, it substantially reduces pilot overhead while improving spectral efficiency. Another important advantage of the GBL method is that it does not require regularization or tuning parameters and demonstrates stable convergence behavior.
    \item To further improve computational efficiency, a low-complexity variant of GBL, termed LCGBL, is also developed. The LCGBL method significantly reduces the computational burden of GBL by replacing the inversion of a large-dimensional matrix with the inversion of two smaller matrices. Since the complexity of matrix inversion scales cubically with its dimension, this modification leads to a considerable reduction in overall computational complexity.
    \item Subsequently, the proposed GBL and GOMP techniques are extended to joint SR of group-sparse over-sampled FD O-OFDM symbols across measurement vectors, using the estimated group-sparse multipath CIR obtained from their respective CE methods. This extension improves BER performance compared to conventional non-sparse and sparse recovery techniques. 
    
    
    \item To establish a benchmark for the CE performance of the GBL-based technique, the Bayesian Cramér-Rao lower bound (BCRLB) is computed for the reduced-PAPR O-OFDM VLC system.
    \item The presented GBL-based CE and SR techniques are evaluated for the two most prominent O-OFDM systems, namely DCO-OFDM followed by ACO-OFDM, using multiple metrics, such as outage probability (OP), BER, pilot overhead ($\rho$), normalized mean-squared-error (NMSE), and NMSE variation with different numbers of frames. 
    
\end{enumerate}

\textit{Notations}: These notations are used throughout the paper: The operators $(\cdot)^\dagger$ denotes the pseudoinverse of a matrix. Furthermore, $\mathbb{R}^{M \times N}_{+}$ denotes the set of $M \times N$ matrices whose elements are non-negative real values, respectively. $\mathbb{E} \{ \cdot \}$ represents the statistical expectation operator, diag$\{ \cdot \}$ represents a diagonal matrix, Tr($\mathbf{A}$) denotes
the trace of the matrix $\mathbf{A}$, and arg~$f(.)$ returns the argument of the function $f(.)$. The function $\operatorname{Re}(.)$ denotes the real part and $\det(\cdot)$ is the determinant of the corresponding matrix. The quantity $\tilde{(\cdot)}$ denotes a variable in FD, $\overline{(\cdot)}$ is a vector in FD, and $\widehat{(\cdot)}$ denotes an estimate of the variable.

\section{Multipath CS-based reduced-PAPR O-OFDM-VLC System Models}
The two most often used O-OFDM systems, DCO-OFDM and ACO-OFDM, are detailed in this section. Unlike RF-based OFDM systems, DCO-OFDM and ACO-OFDM systems directly modulate the illumination of the LEDs. Hence, the output signal must be real-valued and non-negative. 

\subsection{DCO-OFDM system model }
Fig. \ref{DCO1_MMV} depicts the schematic of the MMV-based DCO-OFDM transceiver. Initially, the serial bit sequence of a single OFDM block having $M$ subcarriers is converted to complex-valued symbols via QAM. The resultant modulated signal is hosted by: $\mathbf{\overline{x}} = [\tilde{x}_0, \hdots, \tilde{x}_{M-2}, \tilde{x}_{M-1}]^T\in \mathbb{C}^{M \times 1}$, so that the information symbols are from $\tilde{x}_1 ~\text{to}~ \tilde{x}_{M/2 - 1}$ and follow the characteristic, $\tilde{x}_j = \tilde{x}^{\ast}_{M-j} ~ \text{where} ~ M/2 +1 \leq j \leq M-1$. Moreover, to eliminate the DC signal, we set $\tilde{x}_0 = \tilde{x}_{M/2} = 0$. The real-valued TD signal of the vector $\mathbf{\overline{x}}$ is determined using the $M$-point inverse fast Fourier transform (IFFT), given as $\mathbf{x} = \mathbf{W}\mathbf{\overline{x}} $, where $\mathbf{W} = \{ w_{n,k} \}^{M-1}_{n,k = 0} \in \mathbb{C}^{M \times M} $ and $w_{n,k} = \frac{1}{M}e^{j \left(\frac{2\pi n k}{M} \right)}$. 
\textcolor{black}{The resultant signal undergoes parallel-to-serial (P/S) conversion, followed by the concatenation of a cyclic prefix (CP), assuming perfect synchronization at the receiver.} With the objective of eliminating the ISI, the length of the CP ($L_{\text{CP}}$) is kept significantly higher than the VLC multipath channel's delay spread. The waveform $x(t)$ is then produced by the digital-to-analog converter (DAC) and subsequently applying a low-pass filter (LPF). For intensity modulation, $x(t)$ must be unipolar in nature, therefore, a DC bias ($B_{\text{DC}}$) is added to $x(t)$, which is given by $B_{\text{DC}}=  \text{\textscriptv} \sqrt{\mathbb{E} \{ x^2(t)\} }$, and the constant \textscriptv ~is chosen to satisfy $B_{\text{DC}} = 10\log(\text{\textscriptv}^2+1)$ \cite{kahn1997wireless}. The resultant unipolar signal is described as $x_B(t) = x(t) + B_{\text{DC}}$. Consequently, the remaining negative signal contribution is forced to zero, due to the introduction of $B_{\text{DC}}$. The signal obtained is subsequently converted into an optical signal, followed by transmitting it through the multipath VLC channel described by the $L_h$-length vector $\mathbf{h} = \left[ h(0), h(1), h(2), \hdots,h(L_h-1)\right]^T \in \mathbb{R}^{L_h \times 1}_{+}$. The receiver side processing includes CP elimination, converting serial data to parallel (S/P), and the fast Fourier transform (FFT). Thus, in the VLC system, the input-output architecture for the $k^{th}$ subcarrier is given as follows:
\begin{figure}[t]
\centering
\includegraphics[width=\linewidth,height=55mm]{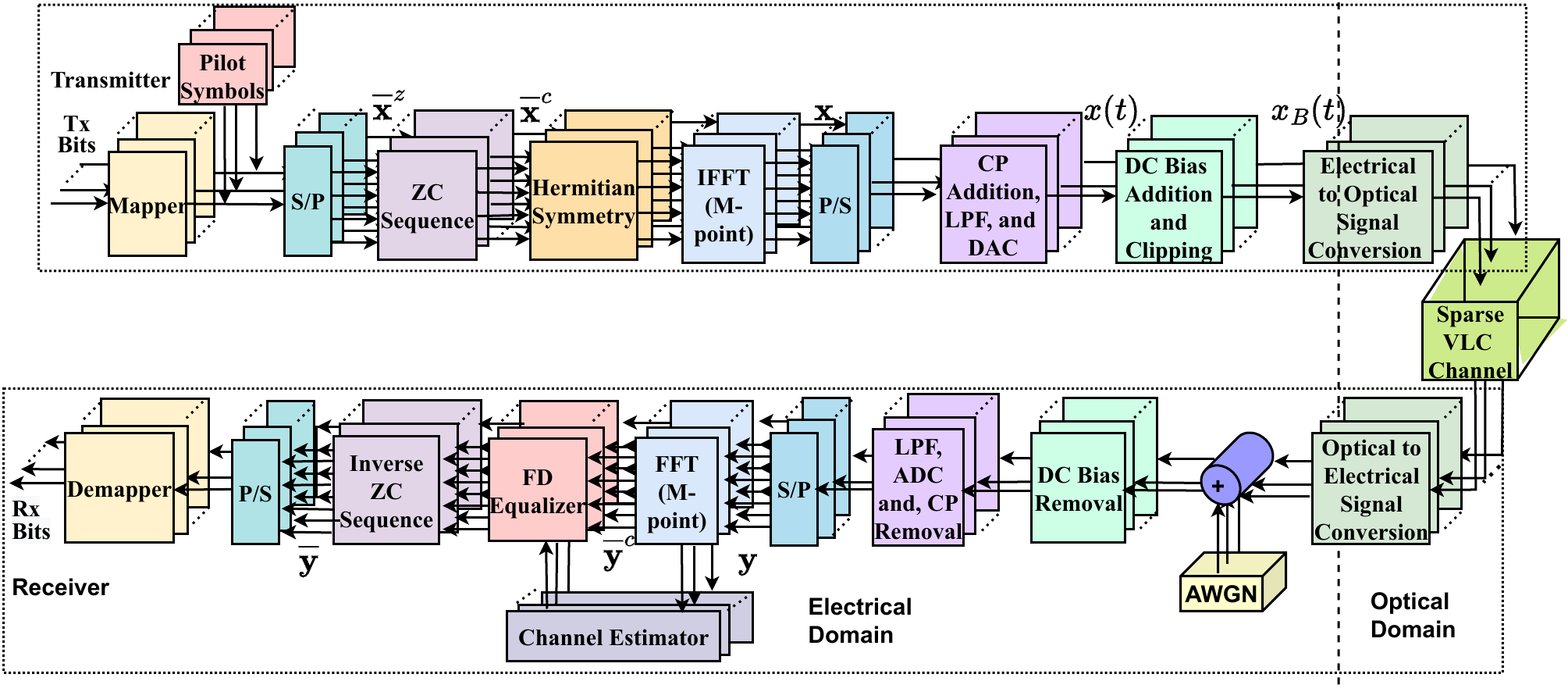}
\caption{Schematic diagram of an MMV-based reduced-PAPR DCO-OFDM VLC transmitter and receiver.}
\label{DCO1_MMV}
\end{figure}
\begin{equation}\label{eq5}
\tilde{y}_k = \tilde{h}_k \tilde{x}_k + \tilde{\textit{n}}_k, ~~k = 0,1,\hdots,M-1,
\end{equation}
where $\tilde{h}_k$ denotes the channel's transfer function (CTF), defined as $\tilde{h}_k = \sum_{l=0}^{L_h-1} h(l)e^{\frac{-j2\pi nk}{M}}$, while $\tilde{n}_k$ is the complex additive white Gaussian noise (AWGN). The vector representation of the received signal after the FFT output is $\mathbf{\overline{y}}=[\tilde{y}_0, \tilde{y}_1, \hdots, \tilde{y}_{M-2}, \tilde{y}_{M-1}]^T \in \mathbb{C}^{M\times 1}$, which is modeled as
\begin{equation}\label{DCO_out}
    \mathbf{\overline{y}} = \text{diag}\{ \mathbf{\overline{h}}\}\mathbf{\overline{x}} + \boldsymbol{\overline{n}},
\end{equation}
where we have $\boldsymbol{\overline{n}} = [\tilde{\mathit{n}}_0, \tilde{\mathit{n}}_1, \hdots, \tilde{\mathit{n}}_{M-1}]^T \in \mathbb{C}^{M \times 1}$, and the VLC channel vector is given by $\mathbf{\overline{h}} = [\tilde{h}_0, \tilde{h}_1, \hdots, \tilde{h}_{M-1}]^T \in \mathbb{C}^{M \times 1}$. 
As discussed in (\ref{DCO_out}), the DCO-OFDM system model comprises $M$-parallel flat-fading systems. Hence, to recover the FD signal $\widehat{x}_k$ corresponding to the $k^{th}$ subcarrier, only single-tap equalization is needed at the receiver side, i.e., we have $\widehat{x}_k = \displaystyle \frac{\tilde{y}_k}{\tilde{h}_k}$. The desirable transmitted bits are then obtained by the demapper/demodulator, which maps the complex symbols $\widehat{x}_k$ to the corresponding bits.
\subsection{ACO-OFDM system model}
Fig. \ref{ACO1_MMV} illustrates the MMV-based ACO-OFDM transmitter and receiver. The differentiating feature of this O-OFDM with respect to DCO-OFDM lies in the arrangement of the data symbols. Explicitly, in the ACO-OFDM system, in line with the Hermitian symmetry of $\tilde{x}_j = \tilde{x}^{\ast}_{M-j}, ~ \text{where} ~ M/2 +1 \leq j \leq M-1$, the data symbols are exclusively allocated to the odd subcarriers among the initial $M/2$ subcarriers. Thus, the effective number of subcarriers that carry information is $M/4$, as compared to $M/2$ in DCO-OFDM. 
Following the IFFT, the resultant signal in the TD is real-valued and follows the anti-symmetry property \cite{shub}.
\begin{figure}
\centering
\includegraphics[width=\linewidth,height=55mm]{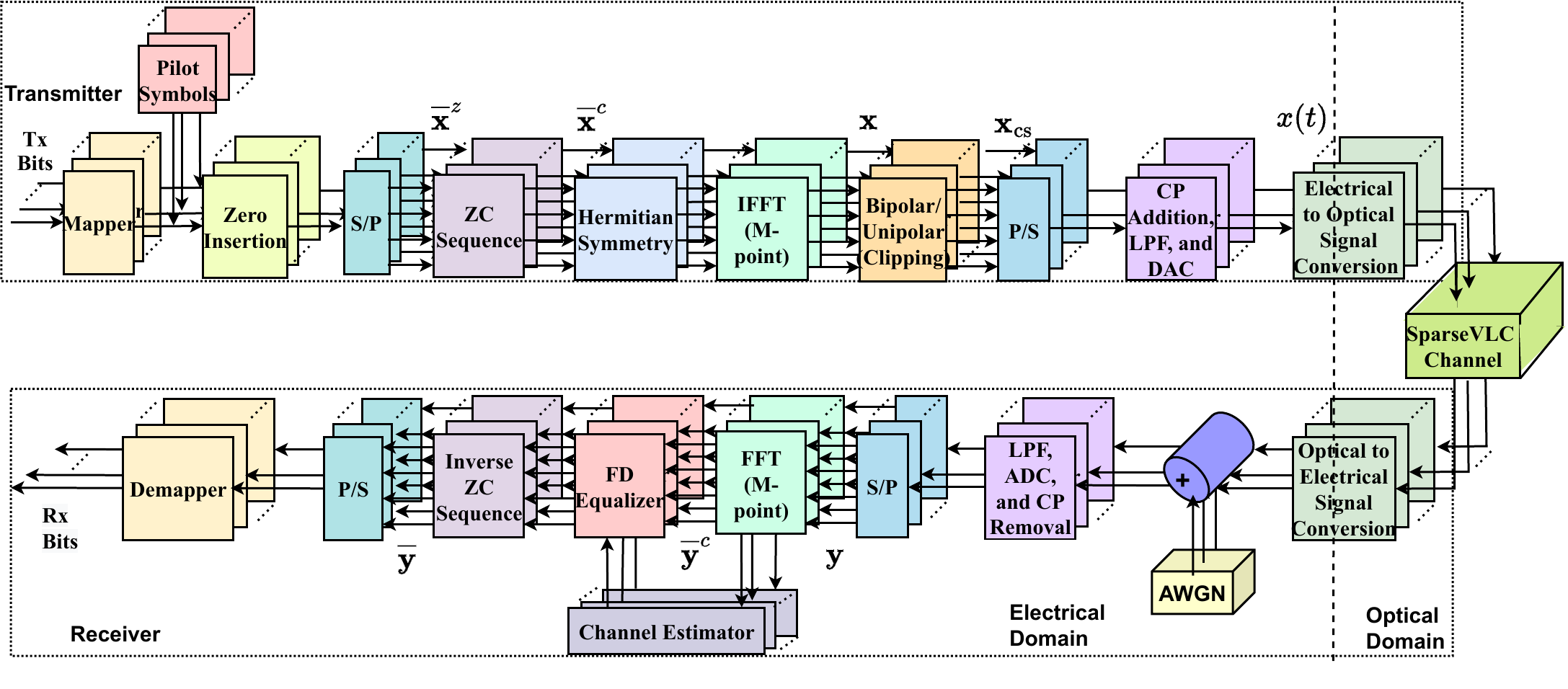}
\caption{Schematic diagram of an MMV-based reduced-PAPR ACO-OFDM VLC transmitter and receiver.}
\label{ACO1_MMV}
\end{figure}
Thus, the first halves of the samples possess the opposite signs, but identical amplitudes to those of the second halves of the samples. Hence, the samples with negative values are clipped to zero without the need for a DC bias. The ensuing signal is then sent successively through the P/S, CP addition, DAC, and LPF modules of Fig. \ref{ACO1_MMV}, producing $x(t)$. Subsequently, the resultant signal is input into the optical modulator and transmitted through the frequency-selective multipath channel within the VLC system \cite{dissanayake2013comparison}. 
The primary distinction in receiver-side processing between the ACO-OFDM and DCO-OFDM lies in the post-FFT extraction of odd subcarriers for the subsequent equalization and demapping processes. The following section describes the CS-based reduced-PAPR model of the
O-OFDM VLC system.
\subsection{CS-based reduced-PAPR O-OFDM-VLC system model}
This section describes the CS-based reduced-PAPR model of the O-OFDM VLC system. Let the FD modulated signal of O-OFDM be denoted as $\mathbf{\overline{x}} \in \mathbb{C}^{N \times 1}$, which is zero-padded, while using the over-sampling factor $l_z$, where $N_z = Nl_z$ as seen in Fig. \ref{a}\subref{1n}. Over-sampling is required to capture all signal peaks accurately, with an over-sampling factor of $l_z=4$ chosen to approximate the continuous TD OFDM signal \cite{popoola2014pilot}. Thus, the number of zero padded symbols is $(l_z-1)N$, and the over-sampled FD O-OFDM symbol is $\mathbf{\overline{x}}^z \in \mathbb{C}^{N_z \times 1}$ (Referred as SMV in the Fig. \ref{a}\subref{1n}). Due to over-sampling, most of the subcarriers of the signal $\mathbf{\overline{x}}^z$ are zero (shown with non-colored boxes). Therefore, the over-sampled FD O-OFDM symbols exhibit sparsity \cite{sharifi2020compressive}, rendering this a CS-based sparse SR problem. Next, to reduce the PAPR of the O-OFDM symbols, the over-sampled FD symbols are mapped using the precoding matrix $\mathbf{\Phi} \in \mathbb{R}^{M \times N_z}$, which is constructed using a real-valued ZC sequence as follows \cite{sharifi2020compressive}
\begin{equation}
\mathbf{\Phi} = 
\begin{bmatrix}
\phi_{0,0} & \phi_{0,1} & \cdots & \phi_{0,N_z-1} \\
\phi_{1,0} & \phi_{1,1} & \cdots &\phi_{1,N_z-1} \\
\vdots  & \vdots  & \ddots & \vdots  \\
\phi_{M-1,0} & \phi_{M-1,1} & \cdots & \phi_{M-1,N_z-1} 
\end{bmatrix},
\end{equation}
where $ \phi_{u,v} = \frac{1}{\sqrt{N_z}}\cos\left[ \frac{\pi}{MN_z} (u + vM)^2 \right]$ and $M = \alpha N_z$. \textcolor{black}{The main motivation for employing ZC precoding is that it provides a highly desirable combination of properties for both PAPR reduction and signal recovery. In particular, ZC sequences are CAZAC, characterized by constant modulus, ideal periodic autocorrelation, and low cross-correlation. These features make them especially attractive for O-OFDM systems, particularly when peak reduction is required without introducing nonlinear distortion or excessive signaling overhead. Furthermore, ZC sequences preserve their structural characteristics under FFT/IFFT operations, which further supports their suitability for transform-domain OFDM processing \cite{chen2020comparison,ma2019performance,hu2012peak,guo2020experimental,guo2018experimental,baig2011zcmt,baig2010new,baig2010papr,baig2013zcmt}.} Thus, the compressed signal is expressed as $\mathbf{\overline{x}}^c = \mathbf{\Phi} \mathbf{\overline{x}}^z$, where $\mathbf{\overline{x}}^c =  [\tilde{x}_0^c, \tilde{x}_1^c, \hdots, \tilde{x}_{M-1}^c]^T \in \mathbb{C}^{M \times 1}$. The resultant signal obeys Hermitian symmetry. Accordingly, the subcarriers $\tilde{x}_1^c ~\text{to}~ \tilde{x}_{M/2 - 1}^c$ satisfy the characteristic $\tilde{x}_j^c = (\tilde{x}^{c}_{M-j})^{\ast},~\text{for} ~ M/2 +1 \leq j \leq M-1$, and we set $\tilde{x}_0^c = \tilde{x}_{M/2}^c = 0$ to eliminate the DC signal. 
\begin{figure*}[h]
	\centering
        \captionsetup[subfigure]{justification=centering}
	\subfloat[]{\label{1n}\includegraphics[width=0.5\linewidth,height=65mm]{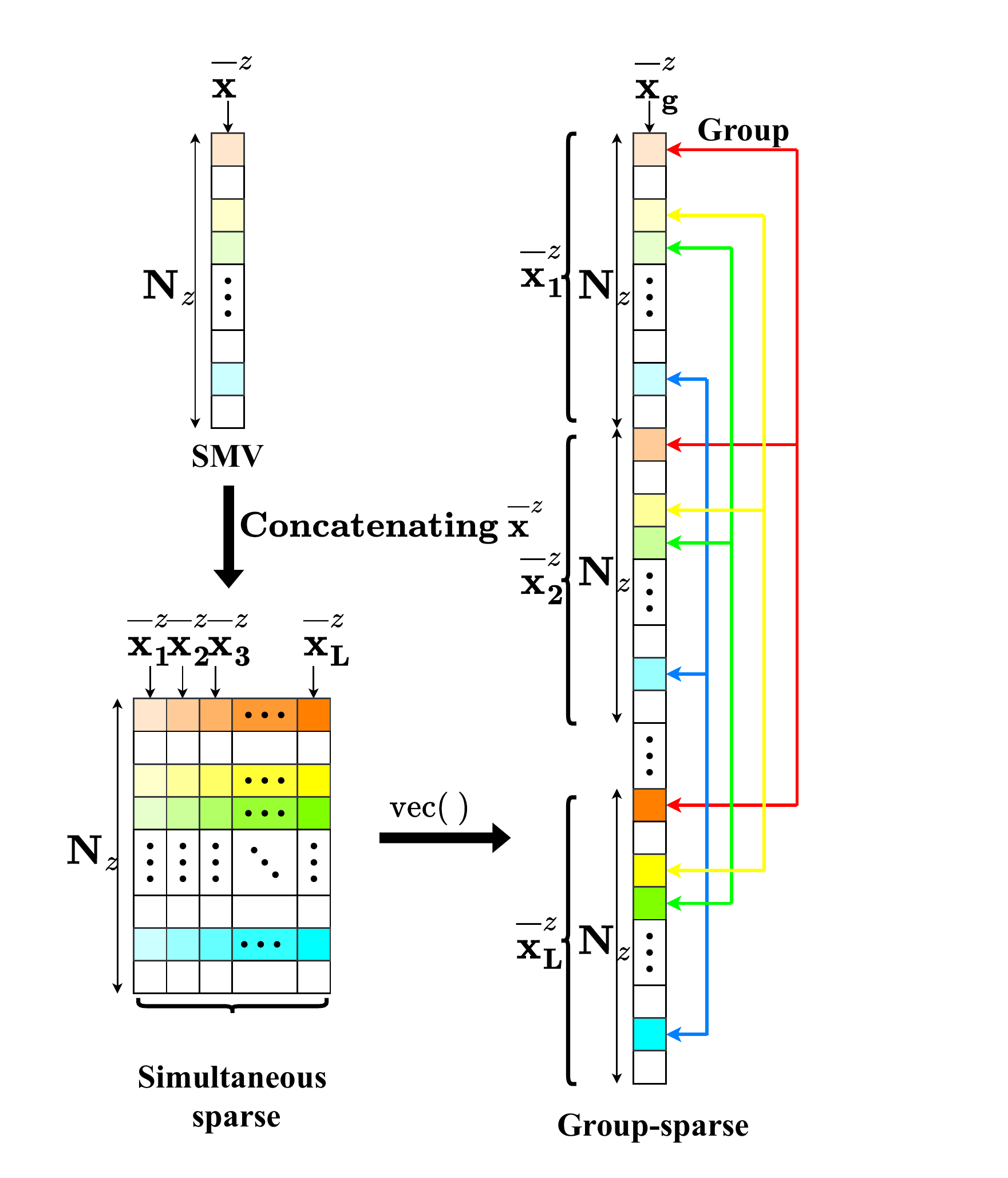}}\hspace{-10mm}
	\subfloat[]{\label{2}\includegraphics[width=0.48\linewidth,height=65mm]{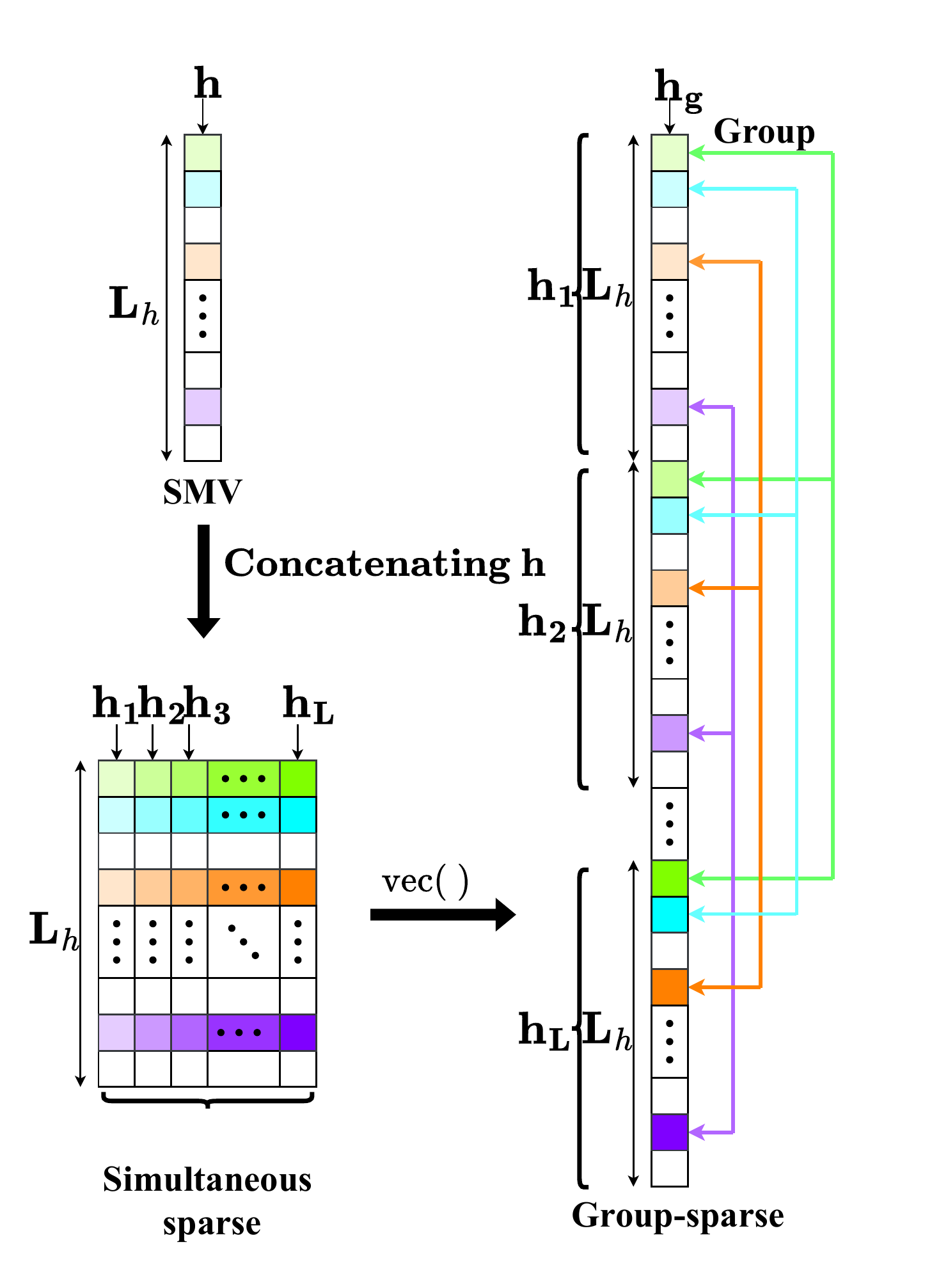}}
	\caption{\textcolor{black}{Simultaneous-sparse and group-sparse structure of the (a) over-sampled FD O-OFDM symbol vector $\mathbf{\overline{x}}^z_{\mathbf{g}}$; (b) multipath VLC CIR vector $\mathbf{h_g}$. Here, the colored boxes represent non-zero entries, with various colors indicating different SMVs having the same sparsity profile, while the non-colored boxes denote zero entries.}}
	\label{a}
\end{figure*}
The vector $\mathbf{\overline{x}}^c$ is subjected to the $M$-point IFFT to generate a TD signal, which is real-valued and it is given as $\mathbf{x} = \mathbf{W} \mathbf{\overline{x}}^c $, where $\mathbf{W} = \{ w_{n,k} \}^{M-1}_{n,k = 0} \in \mathbb{C}^{M \times M} $ and $w_{n,k} = \frac{1}{M}e^{j \left(\frac{2\pi n k}{M} \right)}$. Following CP elimination, S/P conversion, and FFT at the receiver side, the FD output vector $\mathbf{\overline{y}}^c=[\tilde{y}_0^c, \tilde{y}_1^c, \hdots, \tilde{y}_{M-1}^c]^T \in \mathbb{C}^{M \times 1}$ of a reduced-PAPR VLC system is given by
\begin{equation}\label{MMV1}
    \mathbf{\overline{y}}^c =\text{diag}\{ \mathbf{\overline{h}}\}\mathbf{\overline{x}}^c + \mathbf{\overline{n}}^c = \text{diag}\{ \mathbf{\overline{h}}\}\mathbf{\Phi\overline{x}}^z + \mathbf{\overline{n}}^c,
\end{equation}
where $\mathbf{\overline{n}}^c \in \mathbb{C}^{M \times 1}$ is the complex AWGN vector. The resultant signal is passed to the inverse ZC sequence ($\mathbf{\Phi}^H$) to obtain the FD output vector $\mathbf{\overline{y}} \in \mathbb{C}^{N_z \times 1}$ given by
\begin{equation}\label{MMX}
    \mathbf{\overline{y}} = \mathbf{\Phi}^H \mathbf{\overline{y}}^c =\mathbf{\Phi}^H \text{diag}\{ \mathbf{\overline{h}}\}\mathbf{\Phi \overline{x}}^z + \mathbf{\Phi}^H\mathbf{\overline{n}}^c = \mathbf{A}\mathbf{\overline{x}}^z + \mathbf{\overline{n}},
\end{equation}
where $\mathbf{A} = \mathbf{\Phi}^H \text{diag}\{ \mathbf{\overline{h}}\}\mathbf{\Phi} \in \mathbb{C}^{N_z \times N_z}$ is the measurement matrix, and $\mathbf{\overline{n}} = \mathbf{\Phi}^H\mathbf{\overline{n}}^c = [\tilde{n}_0, \tilde{n}_1, \hdots, \tilde{n}_{N_z-1}]^T$. Here, $\tilde{n}_k$ is the complex AWGN with variance $\sigma_{\text{AWGN}}^2$ and zero mean. 
\textcolor{black}{In the considered quasi-static indoor VLC scenario, the MMV model is formed over a short observation interval during which the transmitter, receiver, and propagation environment remain effectively unchanged. Hence, the sparsity pattern, namely the locations of the significant entries in the over-sampled FD O-OFDM symbol vectors, is assumed to remain identical across the $L$ measurement snapshots, as shown in Fig. \ref{a}\subref{1n}. Under this assumption, the snapshots exhibit a simultaneous-sparse structure and the equivalent MMV recovery formats of \eqref{MMX} are:
\begin{equation}\label{VC1}
    \mathbf{\overline{y}}_i = \mathbf{A}_i \mathbf{\overline{x}}^z_i + \mathbf{\overline{n}}_i, ~~~~ i=1,2,\hdots,L. 
\end{equation}
This common-support structure enables the recovery algorithm to jointly exploit the information contained in multiple observation vectors, thereby improving SR accuracy.}
The overall system model of (\ref{VC1}) encompassing the $L$ training frames, which can be represented as
\begin{equation}
    \mathbf{\overline{y}_g} = \mathbf{\ddot{A}}\mathbf{\overline{x}}^z_{\mathbf{g}} + \mathbf{\overline{n}_g},
\end{equation}
where we have $\mathbf{\overline{y}_g} = [\mathbf{\overline{y}}^T_1 ~\mathbf{\overline{y}}^T_2~ \hdots ~\mathbf{\overline{y}}^T_L]^T \in \mathbb{C}^{N_z L\times 1}, \mathbf{\overline{x}}^z_{\mathbf{g}} = [(\mathbf{\overline{x}}^z_1)^T~ (\mathbf{\overline{x}}^z_2)^T \hdots~(\mathbf{\overline{x}}^z_L)^T]^T \in \mathbb{C}^{N_z L\times 1},  \mathbf{\overline{n}_g} = [\mathbf{\overline{n}}^T_1~ \mathbf{\overline{n}}^T_2~ \hdots~ \mathbf{\overline{n}}^T_L]^T \in \mathbb{C}^{N_z L \times 1}$, and the equivalent measurement matrix is $\mathbf{\ddot{A}} = \mathrm{blkdiag}\left( \{ \mathbf{A}_i \}_{i=1}^{L}\right) \in \mathbb{C}^{N_z L \times N_z L}$. Fig. \ref{a}\subref{1n} shows the simultaneous-sparse and group-sparse ($\mathbf{\overline{x}}^z_{\mathbf{g}}$) structure of the over-sampled FD O-OFDM symbol vector. The subsequent section describes the traditional CS-based techniques harnessed for CE and SR in the MMV-based O-OFDM sparse VLC system. 

\section{CS-based CE-aided SR in MMV-based sparse VLC system model}

\textcolor{black}{This section introduces the approximately sparse multipath VLC channel model and the CE methodology adopted in this work. In indoor VLC environments, optical propagation is predominantly diffusive, and the emitted waveform typically undergoes multiple reflections before impinging on the receiver. Consequently, the received signal comprises a dominant specular component together with several weaker diffuse contributions. The resulting discrete-time multipath channel can be expressed as \cite{wilson2011scheduling,qian2017pilot}
\begin{equation}\label{resp2}
h(n)=\sum_{l=0}^{L_h-1} h(l)\delta\big(n-\nu(l)\big),
\end{equation}
where $h(l)$ denotes the VLC CIR coefficient and $\nu(l)$ is the propagation delay associated with the $l^\mathrm{th}$ path. In practice, a substantial portion of the received optical power is carried by a small number of dominant components, whereas the remaining paths contribute negligibly due to higher attenuation \cite{zhang2016enhancing,zhao2014compressed,minn2002investigation}. This structure motivates treating the multipath VLC channel as approximately sparse \cite{niaz2016compressed,zhang2016enhancing,shi2020adaptive,du2016channel,shen2019experimental,xiaoli2021research,9235008,zhao2014compressed,saxena2025multiple}. Accordingly, insignificant taps may be removed via thresholding, setting their gains to zero to improve estimation performance \cite{zhang2016enhancing}. Moreover, since longer-delay reflections experience stronger loss, the tap magnitudes typically decrease with delay \cite{wilson2011scheduling}. Therefore, the TD coefficient of the $l^\mathrm{th}$ path is modeled using an exponential power-delay profile \cite{zhao2013channel,zhao2014compressed,jungnickel2002physical,wu2012channel,saxena2025multiple}
\begin{equation}\label{rep3}
h(l)=\frac{e^{-l t_s/\tau}}{\sum_{l=0}^{L_h-1} e^{-l t_s/\tau}},
\end{equation}
where $\tau$ denotes the channel delay spread, governed by the room geometry and surface reflectivity, and $t_s$ is the sampling period of the O-OFDM TD samples. Typically, $\tau \in [0.5t_s,1.5t_s]$ \cite{jungnickel2002physical,5206382,wilson2011scheduling}. In addition, taps with magnitude below $10^{-8}$ are treated as zero \cite{zhao2014compressed}, resulting in an $s$-sparse VLC CIR with at most $s$ nonzero coefficients, where $s \ll L_h$.}
Note that the discrete TD framework is represented by (\ref{resp2}). Moreover, the normalized samples of the TD multipath VLC CIR are given by (\ref{rep3}). Consequently, (\ref{resp2}) and (\ref{rep3}) model the sparse multipath VLC CIR. 

Next, we describe the traditional CE techniques that exploit the simultaneous group-sparsity of the multipath CIR vector $\mathbf{h}$ and the SR methods that leverage the similar sparse structure of the O-OFDM signal $\mathbf{\overline{x}}^z$ within the VLC system framework. To achieve the goal of CE, let the total number of pilot subcarriers be $N_P$, and the total number of data subcarriers be $N_D$, where, for simplicity, we have assumed that $N_P = N_D = N$. The over-sampled FD pilot symbols are denoted as $\mathbf{\overline{x}}_P^z = [\tilde{x}_0, \tilde{x}_1, \hdots, \tilde{x}_{N_z^P-1}]^T\in \mathbb{C}^{N_z^P \times 1}$ and the corresponding compressed pilot symbol is expressed as $\mathbf{\overline{x}}_P^c = \mathbf{\Phi} \mathbf{\overline{x}}_P^z \in \mathbb{C}^{M_P \times 1}$, where $N^P_z = N_Pl_z$ and $M_P = \alpha N^P_z = M$. The CTF of VLC in FD is denoted by $\mathbf{\overline{h} = Qh}$, where $\mathbf{Q} \in \mathbb{C}^{M \times L_h}$, represents the pruned DFT matrix associated with $Q_{k,l} = e^{-j\left( \frac{2\pi kl}{M}\right)}  $, for $ 0\leq l \leq L_h-1,\text{and~} 0\leq k \leq M-1$. Thus, the FFT processing at the receiver in the case of reduced-PAPR VLC channel is
\begin{equation}
\begin{aligned}
    \mathbf{\overline{y}}_P^c &= \text{diag}\{\mathbf{\overline{h}} \}\mathbf{\overline{x}}_P^c + \mathbf{\overline{n}}_P^c = \text{diag}\{\mathbf{\overline{x}}_P^c \}\mathbf{\overline{h}} + \mathbf{\overline{n}}_P^c \\
    & = \text{diag}\{\mathbf{\Phi \overline{x}}_P^z \}\mathbf{Qh} + \mathbf{\overline{n}}_P^c.
\end{aligned}
\end{equation}
The resultant output pilot symbol $\mathbf{\overline{y}}_P \in \mathbb{C}^{N^P_z \times 1}$ after the despreading of the ZC sequence is given by
\begin{equation}\label{CE1}
    \mathbf{\overline{y}}_P = \mathbf{\Phi}^H\mathbf{\overline{y}}_P^c = \mathbf{\Phi}^H \text{diag}\{\mathbf{\Phi \overline{x}}_P^z \}\mathbf{Qh} + \mathbf{\Phi}^H\mathbf{\overline{n}}^c_P = \mathbf{Bh} + \mathbf{\overline{n}}_P.
\end{equation}
\textcolor{black}{Since the indoor VLC channel is assumed to remain quasi-static over the $L$ pilot observations, the transmitter-receiver geometry and the dominant reflection paths are expected to be unchanged within this interval. Accordingly, the indices of the significant CIR coefficients are also assumed to remain invariant. As a result, the VLC CIR vectors exhibit a common sparse support, leading to a simultaneous-sparse MMV formulation, as illustrated in Fig. \ref{a}\subref{2}. The resulting model is written as
\begin{equation}\label{CE2}
    \mathbf{\overline{y}}_{P,i}= \mathbf{B}_i\mathbf{h}_i + \mathbf{\overline{n}}_{P,i}, ~~~~ i=1,2,\hdots,L.
\end{equation}}
The cumulative system model of (\ref{CE2}) over $L$ training frames can be represented as
\begin{equation}\label{CE3}
    \mathbf{\overline{y}}^{P}_{\mathbf{g}} = \mathbf{\ddot{B}}\mathbf{h_g} + \mathbf{\overline{n}}^P_{\mathbf{g}}, 
\end{equation}
where $\mathbf{\overline{y}}^{P}_{\mathbf{g}} = [\mathbf{\overline{y}}^T_{P,1}~ \mathbf{\overline{y}}^T_{P,2}~ \hdots~ \mathbf{\overline{y}}^T_{P,L}]^T \in \mathbb{C}^{LN^P_z \times 1}$, $\mathbf{h_g} = [\mathbf{h}^T_1 ~\mathbf{h}^T_2~ \hdots~ \mathbf{h}^T_L]^T \in \mathbb{R}^{L_h L \times 1}_{+}$, $ \mathbf{\overline{n}}^P_{\mathbf{g}} = [\mathbf{\overline{n}}^T_{P,1}~ \mathbf{\overline{n}}^T_{P,2}~ \hdots~ \mathbf{\overline{n}}^T_{P,L}] \in \mathbb{C}^{LN^P_z\times 1}$, and the equivalent measurement matrix is $\mathbf{\ddot{B}} = \mathrm{blkdiag}\left( \{ \mathbf{B}_i \}_{i=1}^{L}\right)
\in \mathbb{C}^{L N^P_z \times L_h L}$. Consequently, the objective is to perform CE by leveraging the group-sparsity of the concatenated multipath CIR vector $\mathbf{h_g}$ with the aid of the given output pilot vector $\mathbf{\overline{y}}^{P}_{\mathbf{g}}$. Let the estimated sparse VLC CIR matrix using the CS-based technique be denoted by $\mathbf{\widehat{h}_g}$, where we have $\mathbf{\widehat{h}_g} = [\mathbf{\widehat{h}}^T_1~ \mathbf{\widehat{h}}^T_2~ \hdots~ \mathbf{\widehat{h}}^T_L]^T \in \mathbb{R}^{L_h L \times 1}_{+}$. Again, the estimated FD VLC CTF is given by $\mathbf{\overline{h}}_i = \mathbf{Q\widehat{h}}_i$, where $i = 1, 2, \hdots, L$. The over-sampled FD data symbols are denoted as $\mathbf{\overline{x}}_D^z = [\tilde{x}_0, \tilde{x}_1, \hdots, \tilde{x}_{N_z^D-1}]^T\in \mathbb{C}^{N_z^D \times 1}$ and the corresponding compressed data symbol is expressed as $\mathbf{\overline{x}}_D^c = \mathbf{\Phi} \mathbf{\overline{x}}_D^z \in \mathbb{C}^{M_D \times 1}$, where $N^D_z = N_Dl_z$ and $M_D = \alpha N^D_z = M$.
The resultant FD output data vector $\mathbf{\overline{y}}^c_{D,i} \in \mathbb{C}^{M_D \times 1} $ follows from (\ref{MMX}) as
\begin{equation}\label{MMVLC1}
\begin{aligned}
    \mathbf{\overline{y}}^c_{D,i} = \text{diag}\{ \mathbf{\overline{h}}_i\} \mathbf{\overline{x}}_{D,i}^c + \mathbf{\overline{n}}^c_{D,i} = \text{diag}\{ \mathbf{\overline{h}}_i\} \mathbf{\Phi}  \mathbf{\overline{x}}_{D,i}^z + \mathbf{\overline{n}}^c_{D,i}
\end{aligned}
\end{equation}
The resultant output data symbol $\mathbf{\overline{y}}_D \in \mathbb{C}^{N^D_z \times 1}$ after the despreading of the ZC sequence is given by
\begin{equation}\label{MMVLC21}
\begin{aligned}
    \mathbf{\overline{y}}_{D,i} & = \mathbf{\Phi}^H\mathbf{\overline{y}}^c_{D,i} = \mathbf{\Phi}^H\text{diag}\{ \mathbf{\overline{h}}_i\} \mathbf{\Phi}  \mathbf{\overline{x}}_{D,i}^z + \mathbf{\Phi}^H\mathbf{\overline{n}}^c_{D,i} \\
&=\mathbf{\widehat{A}}_{D,i}\mathbf{\overline{x}}_{D,i}^z + \mathbf{\overline{n}}_{D,i},
\end{aligned}
\end{equation}
where the sparse FD data vector is $\mathbf{\overline{x}}_{D,i}^z \in \mathbb{C}^{N^D_z \times 1}$, and the estimated measurement matrix is $\mathbf{\widehat{A}}_{D,i} = \mathbf{\Phi}^H\text{diag}\{ \mathbf{\overline{h}}_i\} \mathbf{\Phi} \in \mathbb{C}^{N^D_z \times N^D_z}$ for $i=1,2,\hdots,L$. Since the sparsity profile of $\mathbf{\overline{x}}_{D,i}^z$ is the same for $L$ different measurements vectors, the MMV representation of (\ref{MMVLC1}) is  
\begin{equation}\label{MMVLC2}
    \mathbf{\overline{y}}^D_{\mathbf{g}} = \mathbf{\ddot{A}}_D\mathbf{\overline{x}}^z_{\mathbf{g}} + \mathbf{\overline{n}}^D_{\mathbf{g}},
\end{equation}
where $\mathbf{\overline{y}}^D_{\mathbf{g}} = [\mathbf{\overline{y}}^T_{D,1}~ \mathbf{\overline{y}}^T_{D,2}~\hdots~ \mathbf{\overline{y}}^T_{D,L}]^T \in \mathbb{C}^{LN^D_z \times 1}, \mathbf{\overline{x}}^z_{\mathbf{g}} = [(\mathbf{\overline{x}}^z_{D,1})^T~ (\mathbf{\overline{x}}^z_{D,2})^T~ \hdots~(\mathbf{\overline{x}}^z_{D,L})^T]^T \in \mathbb{C}^{LN^D_z \times 1}, \mathbf{\overline{n}}^D_{\mathbf{g}} = [\mathbf{\overline{n}}^T_{D,1}~ \mathbf{\overline{n}}^T_{D,2}~ \hdots~ \mathbf{\overline{n}}^T_{D,L}]^T \in \mathbb{C}^{LN^D_z\times 1}$, and the equivalent estimated measurement matrix is $\mathbf{\ddot{A}}_D = \mathrm{blkdiag}\left(\{\mathbf{\widehat A}_{D,i} \}_{i=1}^{L}\right) \in \mathbb{C}^{LN^D_z \times LN^D_z}$. 
The objective of this procedure is to accomplish SR, exploiting the group-sparsity of the FD data vector $\mathbf{\overline{x}}^z_{\mathbf{g}}$, from the FD output data vector $\mathbf{\overline{y}}^D_{\mathbf{g}}$. As the O-OFDM-VLC system model outlined in (\ref{MMVLC1}) is effectively transformed into $M$-parallel flat-fading systems, one-tap equalization is sufficient for the retrieval of the FD data symbols $\mathbf{\overline{x}}_{D,i}^c$ and then subsequently $\mathbf{\overline{x}}_{D,i}^z$. The signal $\mathbf{\overline{x}}_{D,i}^z$ is then directed to a down-sampler and demapper/demodulator, which converts the complex estimated symbol $\mathbf{\overline{x}}_{D,i}^z$ into the appropriate bits, allowing the transmitted data to be recovered. The following section outlines the traditional LS and LMMSE-based CE and SR methods utilized to estimate the multipath VLC CIR $\mathbf{h_g}$ and sparse FD data $\mathbf{\overline{x}}^z_{\mathbf{g}}$. 
\subsection{Conventional VLC CE-aided SR techniques}
The CE process using the LS approach is stated as follows
\begin{equation}\label{Papr6}
    \mathbf{\widehat{h}}_{i,\text{LS}} = \arg \min _{\mathbf{h}_i} || \mathbf{\overline{y}}_{P,i}-\mathbf{B}_i\mathbf{h}_i||_2^2, ~~ i=1,2,\hdots,L. 
\end{equation}
Calculating the derivative of the objective function given in (\ref{Papr6}) relative to the $\mathbf{h}_i$ and equating it to zero results in the renowned LS solution, given as $\mathbf{\widehat{h}}_{i,\text{LS}} = (\mathbf{B}^H_i \mathbf{B}_i)^{-1} \mathbf{B}^H_i \mathbf{\overline{y}}_{P,i}$ \cite{kay1993fundamentals}. The CTF is formulated as $\mathbf{\overline{h}}_{i,\text{LS}} = \mathbf{Q}\mathbf{\widehat{h}}_{i,\text{LS}}$. Similarly, one can perform LS-based SR from (\ref{MMVLC21}) using the estimated CTF to obtain the FD data vector, which is given as $\mathbf{\overline{x}}^z_{D,i,\text{LS}} = (\mathbf{\widehat{A}}^H_{D,i} \mathbf{\widehat{A}}_{D,i})^{-1} \mathbf{\widehat{A}}^H_{D,i} \mathbf{\overline{y}}_{D,i}$ \cite{kay1993fundamentals}. Alternatively, the mean square error (MSE) of CE is minimized via the LMMSE estimator, which leverages a linear estimator defined as follows $\mathbf{\widehat{h}}_{i,\text{LMMSE}} = \mathbf{L\overline{y}}_{P,i}$.  The associated MSE is defined as $\mathbb{E} \left[ || \mathbf{\widehat{h}}_{i,\text{LMMSE}}-\mathbf{h}_i ||^2_2 \right]$. Minimizing the MSE in the context of $\mathbf{L}$ yields $\mathbf{L} = \mathbf{R}_{\mathbf{h}_i\mathbf{\overline{y}}_{P,i}} \mathbf{R}_{\mathbf{\overline{y}}_{P,i}\mathbf{\overline{y}}_{P,i} }^{-1}$. Thus, the multipath CIR $\mathbf{h}_i$ obtained using the conventional LMMSE CE is given as follows $ \mathbf{\widehat{h}}_{i,\text{LMMSE}} = \mathbf{R}_{\mathbf{h}_i\mathbf{\overline{y}}_{P,i}} \mathbf{R}_{\mathbf{\overline{y}}_{P,i}\mathbf{\overline{y}}_{P,i}}^{-1}\mathbf{\overline{y}}_{P,i},$
where $\mathbf{R}_{\mathbf{h}_i\mathbf{\overline{y}}_{P,i}} =  \mathbb{E}[\mathbf{h}_i\mathbf{\overline{y}}_{P,i}^H]$ denotes the cross-covariance matrix of the output pilot symbol $\mathbf{\overline{y}}_{P,i}$ with multipath CIR $\mathbf{h}_i$, and the auto-covariance of output pilot vector $\mathbf{\overline{y}}_{P,i}$ determined as $\mathbf{R}_{\mathbf{\overline{y}}_{P,i}\mathbf{\overline{y}}_{P,i}} = \mathbb{E}[\mathbf{\overline{y}}_{P,i}\mathbf{\overline{y}}_{P,i}^H]$. Upon incorporating the quantities $\mathbf{R}_{\mathbf{h}_i\mathbf{\overline{y}}_{P,i}}$ and $\mathbf{R}_{\mathbf{\overline{y}}_{P,i}\mathbf{\overline{y}}_{P,i}}$, the LMMSE estimator can be explicitly described as
\begin{equation}
  \mathbf{\widehat{h}}_{i,\text{LMMSE}} = \left(\mathbf{B}^H_i \mathbf{R}_{P,i}^{-1} \mathbf{B}_i + \mathbf{R}_{\mathbf{h}_i\mathbf{h}_i}^{-1} \right)^{-1} \mathbf{B}^H_i \mathbf{R}_{P,i}^{-1} \mathbf{\overline{y}}_{P,i},  
\end{equation}
where $i = 1, 2, \hdots, L$. In the above, $\mathbf{R}_{\mathbf{h}_i\mathbf{h}_i} = \mathbb{E}[\mathbf{h}_i\mathbf{h}^H_i] $ represents the \textit{a priori} covariance matrix of the multipath VLC CIR $\mathbf{h}_i$ and the covariance matrix of the noise is given by $\mathbf{R}_{P,i} = \mathbb{E}[\mathbf{\overline{n}}_{P,i} \mathbf{\overline{n}}_{P,i}^H]$. Furthermore, the estimated CTF is obtained as $\mathbf{\overline{h}}_{i,\text{LMMSE}} = \mathbf{Q}\mathbf{\widehat{h}}_{i,\text{LMMSE}}$. Along similar lines, using (\ref{MMVLC21}) and the estimated CTF, one can perform LMMSE-based SR to obtain the estimated FD data vector given as
\begin{equation}
  \mathbf{\overline{x}}_{D,i,\text{LMMSE}}^z = 
  \left( \mathbf{\widehat{A}}^H_{D,i} \mathbf{R}_{D,i}^{-1} \mathbf{\widehat{A}}_{D,i} + \mathbf{R}_{\mathbf{\overline{x}}_{D,i}^z{\mathbf{\overline{x}}_{D,i}^z} }^{-1}\right)^{-1}\mathbf{\widehat{A}}^H_{D,i} \mathbf{R}_{D,i}^{-1} \mathbf{\overline{y}}_{D,i}, 
\end{equation}
where $i = 1, 2, \hdots, L$, $\mathbf{R}_{\mathbf{\overline{x}}_{D,i}^z{\mathbf{\overline{x}}_{D,i}^z}} = \mathbb{E}[(\mathbf{\overline{x}}_{D,i}^z)(\mathbf{\overline{x}}_{D,i}^z)^H]$ is the \textit{a priori} covariance matrix of the FD data vector $\mathbf{\overline{x}}_{D,i}^z$ and $\mathbf{R}_{D,i} = \mathbb{E}[\mathbf{\overline{n}}_{D,i}\mathbf{\overline{n}}_{D,i}^H]$ is the noise covariance matrix \cite{kay1993fundamentals}.

It is important to note that traditional methods such as LMMSE as well as LS do not consider the sparse characteristics of the multipath CIR $\mathbf{h}_i$, the FD data vector $\mathbf{\overline{x}}^z_{D,i}$, and group-sparsity of the CIR $\mathbf{h_g}$ and the FD data vector $\mathbf{\overline{x}}_g^z$. Moreover, conventional estimation requires $N_z^P \geq L_h$ for the existence of a unique solution, meaning that the number of pilots is required to be higher than or equal to the CIR length. Consequently, sparse estimation techniques are more suitable for the CE problems derived in (\ref{CE3}) and the SR problems derived in (\ref{MMVLC2}), since they take into consideration the simultaneous group-sparsity of the CIR $\mathbf{h_g}$ and of the O-OFDM signal $\mathbf{\overline{x}}^z_{\mathbf{g}}$, which result in a notable pilot overhead reduction and SR accuracy improvement. In this regard, the modified $l_0$-norm minimization-based cost function for simultaneous sparse CE is formulated as
\begin{equation}\label{bn1}
\begin{aligned}
& \underset{\mathbf{h_g}}{\mathrm{minimize}} ~~~ {\Vert \mathbf {\mathbf{h_g}}\Vert }_0
\\
& \text{subject to} ~~~ {\Vert { \mathbf{\overline{y}}^{P}_{\mathbf{g}}} - \mathbf{\ddot{B}\mathbf{h}_g}\Vert }_0^2\leq \text{\textxi}_1.
\end{aligned}
\end{equation}
Similarly, the modified $l_0$-norm minimization-based cost function for the simultaneous sparse SR is expressed as
\begin{equation}\label{GOMP1}
\begin{aligned}
& \underset{\mathbf{\overline{x}}^z_{\mathbf{g}}}{\mathrm{minimize}} ~~~ {\Vert \mathbf{\overline{x}}^z_{\mathbf{g}}\Vert }_0
\\
& \text{subject to} ~~~ {\Vert {\mathbf{\overline{y}}^D_{\mathbf{g}}} - \mathbf{\ddot{A}}_D\mathbf{\overline{x}}^z_{\mathbf{g}}\Vert }_0^2\leq \text{\textxi}_2, 
\end{aligned}
\end{equation}
where $\text{\textxi}_1$ and $\text{\textxi}_2$ are tunable parameters that rely on the noise power, which are defined as $\text{\textxi}_1 = \mathrm{Tr}\left( \mathbf{R}_P \right)$, $\text{\textxi}_2 = \mathrm{Tr}\left( \mathbf{R}_D \right)$ with $\mathbf{R}_P = \mathbb{E}\left[(\mathbf{\overline{n}}^P_{\mathbf{g}})(\mathbf{\overline{n}}^P_{\mathbf{g}})^H\right]$ and $\mathbf{R}_D = \mathbb{E}\left[(\mathbf{\overline{n}}^D_{\mathbf{g}})(\mathbf{\overline{n}}^D_{\mathbf{g}})^H\right]$ denoting the noise covariance matrices. The significant benefit of this sparse CS-based technique is that it requires a remarkably reduced number of observations for recovering the respective sparse vectors, thanks to the powerful CS method \cite{4472240}. The subsequent section defines a GOMP-based technique conceived for sparse SR-aided CE relying on a significantly reduced number of measurement vectors, i.e., $N_z^P << L_h$.
\begin{algorithm}[t]
\DontPrintSemicolon
  \KwIn{Pilot vector received $\mathbf{\overline{y}}^P_{\mathbf{g}} \in \mathbb{C}^{N^P_zL \times 1}$, stopping threshold \textctc, measurement matrix $\mathbf{\ddot B} \in \mathbb{C}^{N^P_zL \times L_{h}L}$}
  \KwOut{Estimated group CIR sparse $\mathbf{h_g}$}
  \textbf{Initialization:} Residue $\mathbf{r}_{-1}=\mathbf{0}_{LN^P_z \times 1}$, index set $\text{\textUpsilon}$ = [], iteration number $t=1$, $\mathbf{r}_{0}=\mathbf{\overline{y}}^P_{\mathbf{g}}$,  $\mathbf{\ddot{B}}^{\text{\textUpsilon}} = $[], $\mathbf{\widehat{h}}_{\text{GOMP}} = \mathbf{0}_{LL_h \times 1}$ 
  
  \While{$( \vert ~||\mathbf{r}_{t-1}||^2_2 - ||\mathbf{r}_{t-2}||^2_2~ \vert \geq \text{\textctc} )$} 
    {   
       $ \Psi = \mathbf{\ddot{B}}^H \mathbf{r}_{t-1}$\\
       
       $ \varphi =\mathrm {diag}\left ({{\Psi\Psi ^{H}}}\right)$ 
       
       $j = \mathop{\mathrm {arg\,\,max}}\limits_{k = 0,1,\ldots, L_h-1} \sum _{l=1}^{L} \varphi \left [{ (l-1)L_h + k }\right]$
       
       $\mathcal {J} = \big \{ \left [{ (l-1) L_h }\right] + j \big \}_{l=1}^{L}$
       
       $\text{\textUpsilon} = \text{\textUpsilon} \cup \mathcal {J}$
       
       $\mathbf{\ddot{B}}^{\text{\textUpsilon}} = \mathbf{\ddot B}(:,\text{\textUpsilon})$
       
       $\mathbf{h}^t_{\mathbf{g}} = (\mathbf{\ddot B}^{\text{\textUpsilon}})^\dagger \mathbf{\overline{y}}^P_{\mathbf{g}}$
       
       $\mathbf{r}_t = \mathbf{\overline{y}}^P_{\mathbf{g}} - \mathbf{\ddot B}^{\text{\textUpsilon}}\mathbf{h}^t_{\mathbf{g}}$
       
       $t = t + 1$
    }\textbf{end} \\
    \textbf{return:}$~~\mathbf{\widehat{h}}_{\mathbf{g,}\text{GOMP}}(\text{\textUpsilon}) = \mathbf{h}^t_{\mathbf{g}}$
\caption{GOMP-assisted group sparse CE in the O-OFDM-VLC system}
\label{Algo1}
\end{algorithm}
\subsection{GOMP-based sparse CE-aided sparse SR technique}
The GOMP algorithm is an iterative and greedy algorithm that has been designed for group-sparse recovery from the OMP proposed in \cite{4385788}. The steps involved in GOMP-based sparse CE are shown in Algorithm \ref{Algo1}. The following procedures represent the key differences with respect to the standard OMP method of \cite{4385788}. Steps $3$, $4$, and $5$ of each iteration of GOMP are responsible for extracting the group index, denoted by $j$, having the highest correlation with the preceding residue $\mathbf{r}_{t-1}$. Next, in Step $6$, all the $L$ indices that correspond to the selected group are chosen. This distinguishing feature enables GOMP to choose $L$ columns from the matrix $\mathbf{\ddot{B}}$ during each iteration, which corresponds to a group, culminating in a group-sparse estimation. As a result, GOMP performs better than OMP, which chooses a single column from the dictionary matrix for each iteration. Algorithm \ref{Algo1} illustrates the method of group-sparse CE utilizing GOMP. The estimated CTF is expressed as $\mathbf{\overline{h}}_{\mathbf{g},\text{GOMP}} = \mathbf{\ddot Q}\mathbf{\widehat{h}}_{\mathbf{g},\text{GOMP}}$, where we have $\mathbf{\ddot{Q}} = \mathrm{blkdiag}\left(\{\mathbf{Q} \}_{i=1}^{L}\right) \in \mathbb{C}^{LM \times LL_h}$. Thus, the estimated CTF of each frame is $\mathbf{\overline{h}}_{i,\text{GOMP}} = \bigg \{ \mathbf{\overline{h}}_{\mathbf{g,\text{GOMP}}} \left[ (i-1)L + n \right] \bigg \}^{M}_{n=1}$, the estimated measurement matrix is $\mathbf{\widehat{A}}_{D,i} = \mathbf{E}_D\text{diag}\{\mathbf{\overline{h}}_{i,\text{GOMP}}\}\mathbf{E}_D^H $ and thus, the equivalent estimated measurement matrix is $\mathbf{\ddot{A}}_D = \mathrm{blkdiag}\left(\{\mathbf{\widehat A}_{D,i} \}_{i=1}^{L}\right)$ for $i = 1, 2, \hdots, L$. In an analogous manner, one can perform GOMP-based sparse SR from (\ref{MMVLC2}) and (\ref{GOMP1}) using the estimated CFR $\mathbf{\overline{h}}_{i,\text{GOMP}}$, which yields the estimated sparse FD data vector $\mathbf{\overline{x}}^z_{\mathbf{g}}$.

Nonetheless, it is essential to recognize that the proposed GOMP scheme suffers from the same limitations as OMP. These limitations involve sensitivity to the dictionary matrix, to error propagation, and to the stopping threshold selection \cite{srivastava2021bayesian,srivastava2021sparse}. To surmount these shortcomings of GOMP, we present a GBL-based sparse SR technique in the following section that exhibits robust convergence and does not require any regularization/tuning.
\section{Bayesian Learning (GBL)-based group sparse CE and SR in O-OFDM VLC systems}
The proposed GBL framework has evolved from the classic Bayesian principle. In the VLC system, this entails initiating the procedure by allocating a parameterized Gaussian distributed prior to the sparse VLC channel vector $\mathbf{h}_i \in \mathbb{R}^{L_{h} \times 1}$, which is unknown in general, and it is modeled as \cite{srivastava2021sparse}:
\begin{equation} 
f(\mathbf {h}_i; \mathbf \Gamma)=\prod _{k=0}^{L_{h}-1}\frac{1}{(2\pi \gamma _k)^{-\frac{1}{2}}} \exp \Bigg (-\displaystyle \frac{|h_i(k)|^2}{2\gamma _k}\Bigg).  
\end{equation}
Within this framework, the hyperparameter is given by ${\gamma _k}$, $0\leq k \leq L_h-1$, and the hyperparameter matrix is $\boldsymbol{\Gamma } = \mathrm {diag}\big \{ \gamma _{k} \big \}_{k=0}^{L_h-1} \in \mathbb {R}^{L_h \times L_h}_{+}$. It is important to highlight that the \textit{a priori} covariance $\mathbf{R}_h$ of the sparse VLC channel vector $\mathbf{h}_i$ follows $\mathbf{R}_h = \mathbf \Gamma$. Furthermore, it is assumed that $\mathbf{R}_h$ is not known at the initial stage. Given that the sparsity pattern of $\mathbf{h}_i$ is identical for $1 \leq i \leq L$, the corresponding prior for the group sparse VLC CIR $\mathbf{h_g}$ is given by
\begin{equation} \label{GSBL1}
f(\mathbf {h_g}; \mathbf \Gamma)=\prod _{l=1}^{L}\prod _{k=0}^{L_{h}-1}\frac{1}{(2\pi \gamma _k)^{-\frac{1}{2}}} \exp \Bigg (-\displaystyle \frac{|h_l(k)|^2}{2\gamma _k}\Bigg).  
\end{equation}
In order to capitalize on the group sparsity of the VLC channel $\mathbf{h_g}$, the hyperparameter $\gamma _k$ is set to every element of the $k$th group, so that the index varies from $\big \{ [(l-1)L_h]+k \big \}_{l=1}^{L}
$. This results in the $k$th group being entirely comprised of either zeros or non-zeros. Utilizing this unique group-sparse framework, the proposed GBL method estimates the $[LL_h \times 1]$-element vector $\mathbf{h_g}$ via as few as $L_h$ hyperparameters. Furthermore, the \textit{a priori} covariance matrix $\mathbf{\ddot{R}}_h$ of the group-sparse vector $\mathbf{h_g}$ is given by $\mathbf{\ddot{R}}_{h} = \left ({\mathbf {I}_{L} \otimes \mathbf{\Gamma }}\right)$. The mean estimate ${\boldsymbol{\mu }}_{\mathbf g} \in \mathbb {C}^{LL_{h} \times 1}$ and the corresponding error covariance matrix ${\boldsymbol{\Sigma }}_{\mathbf g} \in \mathbb {C}^{LL_{h} \times LL_{h}}$ of the VLC CIR $\mathbf{h_g}$ is given by 
\begin{equation}\label{App3}
    {\boldsymbol{\mu }}_{\mathbf g}= {\boldsymbol{\Sigma }}_{\mathbf g} \mathbf {\ddot{B}}^H \mathbf {R}_P^{-1} \mathbf{\overline{y}}^P_\mathbf{g}, ~~~ {\boldsymbol{\Sigma }}_{\mathbf g} =\Big({\mathbf {\ddot{B}}^H} \mathbf{R}_P^{-1} {\mathbf {\ddot{B}}}+\left(\mathbf {I}_{L} \otimes \mathbf {\widehat{\Gamma} }^{-1}\right)\Big)^{-1},
\end{equation}
where $\mathbf{R}_P \in \mathbb{C}^{LN^P_z \times LN^P_z}$. Consequently, in order to derive the mean estimate ${\boldsymbol{\mu }}_{\mathbf g}$, the hyperparameter matrix $\mathbf \Gamma$ has to be estimated. Furthermore, we observe from (\ref{GSBL1}) that as the hyperparameter obeys $\gamma _{k} \rightarrow 0$, each of the associated elements of the $k$th group also tends to zero \cite{srivastava2021sparse}. To accomplish this objective, it is preferable to select the log-Bayesian evidence $\log \left [{ f(\mathbf{\overline{y}}^P_{\mathbf{g}}; \mathbf \Gamma)}\right]$  maximization matrix $\mathbf{\widehat\Gamma}$, which results in $\log \left [{ f(\mathbf{\overline{y}}^P_{\mathbf{g}}; \mathbf \Gamma)}\right] = c_1 - \log \left[ \det\left( \mathbf{\Sigma}_P\right)\right]-\left(\mathbf{\overline{y}}^P_{\mathbf{g}}\right)^H\left(\mathbf{\Sigma}_P \right)^{-1}\left(\mathbf{\overline{y}}^P_{\mathbf{g}}\right) $, where the $c_1 = -LN^P_z \log(\pi)$ is a constant and the received pilot covariance matrix is $\mathbf{\Sigma}_P = \mathbf{R}_P + \mathbf{\ddot{B}}\left(\mathbf{I}_L \otimes \mathbf\Gamma \right)\mathbf{\ddot{B}}^H \in \mathbb{C}^{LN^P_z \times LN^P_z}$. The resultant optimization problem is intractable, since maximizing the evidence of the log-Bayesian with regard to the hyperparameter matrix $\mathbf{\Gamma}$ results in a non-concave problem \cite{srivastava2021sparse}. In this case, maximizing the cost function at each iteration may be accomplished using the expectation-maximization (EM) method. Furthermore, finding a local optimum is also guaranteed by the EM method. Consequently, the proposed GBL technique utilizes the EM method for group-sparse CSI estimation in the O-OFMD-VLC system. We next infer the ensuing stages of this process. The complete information set is given as $\left \{{\mathbf{\overline{y}}^P_{\mathbf{g}}
, \mathbf{h_g}}\right \}$, so that the hyperparameter estimate in the $(j-1)$st EM-iteration is $\widehat {\boldsymbol{\Gamma }}^{(j-1)} = \mathrm {diag}\left \{{ \widehat {\gamma }_{k}^{(j-1)} }\right \}_{k=0}^{L_h-1} \in \mathbb {R}^{L_h \times L_h}_{+}$. The process of revising the estimate $\widehat {\boldsymbol{\Gamma }}^{(j)}$ in the $j$th EM-iteration is outlined in Theorem-$1$.
\begin{theorem}
\textit{In the $j$th EM iteration, the hyperparameter update $\widehat {\gamma }_{k}^{(j)}$ that maximizes the conditional-expectation of the log-likelihood function for the given $\widehat {\gamma }_{k}^{(j-1)}$, where $0\leq k\leq L_h-1$, corresponding to the complete information set $\left \{{\mathbf{\overline{y}}^P_{\mathbf{g}}
, \mathbf{h_g}}\right \}$ is represented by
\begin{equation} 
\mathbf {\mathcal {L}}\left ({\boldsymbol{\Gamma }| \widehat {\boldsymbol{\Gamma }}^{(j-1)}}\right)= \mathbb {E}_{ \mathbf{h_g} \vert \mathbf{\overline{y}}^P_{\mathbf{g}}; \widehat {\boldsymbol{\Gamma }}^{(j-1)}} \bigg \{\log \left [{ f \left ({\mathbf{\overline{y}}^P_{\mathbf{g}}
, \mathbf{h_g}}; \boldsymbol{\Gamma } \right) }\right] \bigg \},
\end{equation}
is as follows:  
\begin{equation}\label{App6}
\begin{aligned}
\widehat{\gamma }_{k}^{(j)}& = \frac {1}{L} \sum _{l=1}^{L}  {\boldsymbol{\Sigma }}^{(j)}_{\mathbf{g}} \big [k+(l-1)L,k+(l-1)L \big]\\
&+ \frac{1}{L} \sum_{l=1}^{L} \left \vert{ \boldsymbol{\mu}^{(j)}_{\mathbf{g}}\big [k+(l-1)L\big] }\right \vert ^{2}, 
\end{aligned}
\end{equation}
where ${\boldsymbol{\Sigma }}^{(j)}_{\mathbf g} =\left[{\mathbf {\ddot{B}}^H} \mathbf{R}_P^{-1} {\mathbf {\ddot{B}}}+\left(\mathbf {I}_{L} \otimes \left(\mathbf {\widehat \Gamma }^{(j-1)}\right)^{-1}\right)\right]^{-1}$ and $\boldsymbol {\mu}^{(j)}_{\mathbf{g}} = {\mathbf{\Sigma_g}}^{(j)} \mathbf{\ddot B}^{H} {\mathbf {R}}_{P}^{-1} {\mathbf {\overline{y}}}^P_{\mathbf{g}}$.}
\end{theorem}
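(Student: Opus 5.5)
The plan is the standard EM derivation for sparse Bayesian learning, adapted to the shared (group) hyperparameters. I would first factorize the complete-data likelihood as $f(\mathbf{\overline{y}}^P_{\mathbf{g}},\mathbf{h_g};\boldsymbol{\Gamma}) = f(\mathbf{\overline{y}}^P_{\mathbf{g}}\,|\,\mathbf{h_g})\, f(\mathbf{h_g};\boldsymbol{\Gamma})$. By (\ref{CE3}), the first factor is a Gaussian density with mean $\mathbf{\ddot{B}}\mathbf{h_g}$ and covariance $\mathbf{R}_P$. It does not depend on $\boldsymbol{\Gamma}$, so it contributes only an additive constant to $\mathcal{L}(\boldsymbol{\Gamma}|\widehat{\boldsymbol{\Gamma}}^{(j-1)})$ and can be dropped. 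Taking the logarithm of the prior (\ref{GSBL1}) then gives, up to constants,
\begin{equation*}
\log f(\mathbf{h_g};\boldsymbol{\Gamma}) = -\sum_{k=0}^{L_h-1}\sum_{l=1}^{L}\left(\tfrac{1}{2}\log\gamma_k + \frac{|h_l(k)|^2}{2\gamma_k}\right).
\end{equation*}
This objective decouples across the $L_h$ hyperparameters. Each $\gamma_k$ appears in exactly the $L$ terms belonging to the $k$th group, and this grouping is the only structural difference from the single-measurement-vector case treated in \cite{shub}.

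For the E-step I need the posterior of $\mathbf{h_g}$ given $\mathbf{\overline{y}}^P_{\mathbf{g}}$ under $\widehat{\boldsymbol{\Gamma}}^{(j-1)}$. Both the prior $\mathcal{N}(\mathbf{0},\mathbf{I}_L\otimes\widehat{\boldsymbol{\Gamma}}^{(j-1)})$ and the likelihood are Gaussian, so completing the square (equivalently, invoking the standard linear-Gaussian posterior already quoted in (\ref{App3})) shows the posterior is Gaussian. Its covariance is $\boldsymbol{\Sigma}^{(j)}_{\mathbf g}$ and its mean is $\boldsymbol{\mu}^{(j)}_{\mathbf g}$, exactly as written in the statement. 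It follows that
\begin{equation*}
\mathbb{E}\{|h_l(k)|^2\} = \boldsymbol{\Sigma}^{(j)}_{\mathbf g}[m,m] + |\boldsymbol{\mu}^{(j)}_{\mathbf g}[m]|^2,
\end{equation*}
where $m$ is the position of $h_l(k)$ inside $\mathbf{h_g}$. Substituting this gives
\begin{equation*}
\mathcal{L}(\boldsymbol{\Gamma}|\widehat{\boldsymbol{\Gamma}}^{(j-1)}) = c - \sum_{k}\left(\tfrac{L}{2}\log\gamma_k + \frac{1}{2\gamma_k}\sum_{l=1}^{L}\Big(\boldsymbol{\Sigma}^{(j)}_{\mathbf g}[m,m] + |\boldsymbol{\mu}^{(j)}_{\mathbf g}[m]|^2\Big)\right).
\end{equation*}

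For the M-step I differentiate with respect to each $\gamma_k$ and set the result to zero, obtaining $-\frac{L}{2\gamma_k} + \frac{1}{2\gamma_k^2}\sum_l(\cdot) = 0$. This yields $\widehat{\gamma}_k^{(j)} = \frac{1}{L}\sum_l(\cdot)$, which is (\ref{App6}). To confirm that this stationary point is the maximizer, I would note that the function $-\frac{L}{2}\log\gamma - \frac{S}{2\gamma}$ with $S>0$ tends to $-\infty$ both as $\gamma\to 0^+$ and as $\gamma\to\infty$, and has a unique critical point, which must therefore be the global maximum.

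I expect the main obstacle to be bookkeeping rather than conceptual. First, the group index $m$ must be identified consistently: the $k$th group consists of the entries $(l-1)L_h+k$ of $\mathbf{h_g}$, whereas the statement writes $k+(l-1)L$. I would state the indexing convention explicitly and match the statement's notation. Second, the prior (\ref{GSBL1}) is written with normalizing factor $(2\pi\gamma_k)^{+1/2}$ in the numerator, which appears to be a typo. I would read it as the usual $(2\pi\gamma_k)^{-1/2}$, so that the term $-\frac{1}{2}\log\gamma_k$ carries the correct sign. Third, $\mathbf{h_g}$ is real-valued while the measurements are complex. The real-Gaussian form of the prior gives factors of $\frac{1}{2}$ on both terms, and these cancel in the stationarity condition, so the update in (\ref{App6}) is unaffected.
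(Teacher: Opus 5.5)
Your proposal is correct and follows the paper's appendix essentially step for step: drop the $\boldsymbol{\Gamma}$-independent likelihood term, treat the posterior of $\mathbf{h_g}$ as Gaussian with the stated $\boldsymbol{\mu}^{(j)}_{\mathbf g}$ and $\boldsymbol{\Sigma}^{(j)}_{\mathbf g}$, write $\mathbb{E}\{|h_l(k)|^2\}$ as a diagonal variance plus a squared mean, and set the derivative in $\gamma_k$ to zero; your global-maximality check and your remarks on the $(l-1)L_h+k$ indexing and the normalizing-constant typo in the prior go beyond what the paper provides. One correction to your third remark: the real/complex issue matters in the E-step, not the M-step, because the stated $\boldsymbol{\Sigma}^{(j)}_{\mathbf g}$ and $\boldsymbol{\mu}^{(j)}_{\mathbf g}$ are the posterior moments only under the circular prior $\mathbf{h_g}\sim\mathcal{CN}(\mathbf{0},\mathbf{I}_L\otimes\boldsymbol{\Gamma})$ (which the paper's $-L\log\gamma_k$ objective and $\mathcal{CN}$ posterior implicitly assume), whereas a genuinely real prior with complex $\mathbf{\ddot B}$ gives posterior precision $2\operatorname{Re}(\mathbf{\ddot B}^H\mathbf{R}_P^{-1}\mathbf{\ddot B})+(\mathbf{I}_L\otimes\boldsymbol{\Gamma})^{-1}$, so you should use the complex model throughout.
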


\textit{Proof}: The Appendix provides the proof.

The aforementioned EM method is repeated until $\parallel\widehat {\boldsymbol {\Gamma }}^{(j)} - \widehat {\boldsymbol {\Gamma }}^{(j-1)} \parallel ^{2}_{F}\leq \epsilon$ is satisfied or for a maximum of $m_{\max}$ iterations, whichever is met first. Thus, following convergence, the GBL-assisted group sparse VLC channel is derived from the \textit{a posteriori} mean, i.e., $\mathbf{\widehat h}_{\mathbf{g},\text{GBL}} =\boldsymbol{\mu}^{(j)}_{\mathbf{g}}$ and the corresponding error covariance matrix $\mathbf{\Sigma}^{(j)}_{\mathbf{g}}$ represents the resultant estimation uncertainty. The estimated CTF is denoted as $\mathbf{\overline{h}}_{\mathbf{g},\text{GBL}} = \mathbf{\ddot Q}\mathbf{\widehat h}_{\mathbf{g},\text{GBL}}$. Algorithm \ref{sfblms_algo} summarizes the group-sparse CE. Similarly, one can perform GBL-based sparse SR using the estimated CFR $\mathbf{\overline{h}}_{i,\text{GBL}}$, 
which results in the estimated sparse FD data vector $\mathbf{\overline{x}}^z_{\mathbf{g}}$. In this case, the estimated CFR of each frame is $\mathbf{\overline{h}}_{i,\text{GBL}} = \bigg \{ \mathbf{\overline{h}}_{\mathbf{g,\text{GBL}}} \left[ (i-1)L + n \right] \bigg \}^{M}_{n=1}$. The estimated measurement matrix is $\mathbf{\widehat{A}}_{D,i} = \mathbf{\Phi}^H\text{diag}\{\mathbf{\overline{h}}_{i,\text{GBL}}\}\mathbf{\Phi} $ and the equivalent estimated measurement matrix is $\mathbf{\ddot{A}}_D = \mathrm{blkdiag}\left(\{\mathbf{\widehat A}_{D,i} \}_{i=1}^{L}\right)$ for $i = 1, 2, \hdots, L$.

\begin{algorithm}[t]
\DontPrintSemicolon 
\KwIn{Measurement matrix $\mathbf{\ddot B} \in \mathbb{C}^{LN^P_z \times L_{h}L}$, received pilot vector $\mathbf{\overline{y}}^P_{\mathbf{g}} \in \mathbb{C}^{LN^P_z \times 1}$, stopping threshold $\epsilon$, noise covariance matrix $\mathbf {R}_P$, and $\ m_{\max}$}
\KwOut{Estimated group sparse CIR $\mathbf{h_g}$}
\textbf{Initialization:} $\widehat{{\gamma }}_k^{(0)} =1, \ \forall 0\leq k\leq L_{h}-1 \Longrightarrow \widehat{\mathbf {\Gamma }}^{(0)} = \mathbf {I}_{L_{h}}$, set
$\widehat{\mathbf {\Gamma }}^{(-1)} = \mathbf{0}$, and initialize counter $j = -1$  \nonumber

\While{$\left( j < m_{\max} ~~ \&\&~~ \left \Vert{\widehat{\boldsymbol{\Gamma }}^{(j+1)} - \widehat{\boldsymbol{\Gamma }}^{(j)}}\right \Vert^2_F > \epsilon \right)$}
{
 $j \leftarrow j+1$
 
 \textbf{E Step:} Evaluate \textit{a posteriori} covariance and mean given by (\ref{App3})
 
 
 
 \textbf{M Step:} Update the hyperparameter's estimates using (\ref{App6})
 
}\textbf{end}

\textbf{return:~~}{$\widehat{\mathbf{h}}_{\mathbf{g},\text{GBL}} = {\boldsymbol{\mu}}_{\mathbf{g}}^{(j)}$}
\caption{GBL-assisted group sparse CE in the O-OFDM-VLC system}
\label{sfblms_algo}
\end{algorithm}

According to \cite{1315936}, the GBL-based log-likelihood function exhibits fewer local maxima compared to schemes like FOCUSS \cite{gorodnitsky1997sparse}, which enables the GBL to converge to the global minima, leading to its improved convergence properties. Furthermore, the EM algorithm in GBL guarantees global convergence to a fixed log-likelihood point, with the optimization objective increasing in each iteration. Regarding convergence speed, recent research \cite{daskalakis2017ten} demonstrated that when approximating the parameters of a two-component Gaussian mixture model, the EM algorithm typically converges rapidly, within only $10$ iterations. Therefore, combining the robust convergence of the EM algorithm with the properties of the GBL-based log-likelihood yields improved performance for sparse channel estimation. Owing to space limitations, the derivation of the BCRLB related to the NMSE of the estimated group-sparse CSI matrix $\mathbf{\widehat{h}_g}$ is provided in the associated technical report \cite{shubPAPR}.
\subsection{Low Complexity GBL (LCGBL) Based Group-Sparse CE}
Based on the comprehensive discussions in the preceding section, the GBL algorithm necessitates the inversion of a \([L_hL \times L_hL]\)-dimensional matrix \({\boldsymbol{\Sigma }}_{\mathbf g}\) to compute the \textit{a posteriori} covariance, leading to significant computational complexity. To address this issue and streamline the process, this subsection introduces the LCGBL algorithm tailored for group-sparse CE. The covariance matrix \({\boldsymbol{\Sigma }}^{(j)}_{\mathbf g}\) in this case can be approximated as follows.
\begin{equation}
\begin{aligned}
    {\boldsymbol{\Sigma }}^{(j)}_{\mathbf g} &=\left[{\mathbf {\ddot{B}}^H} \mathbf{R}_P^{-1} {\mathbf {\ddot{B}}}+\mathbf {I}_{L} \otimes \left(\mathbf {\widehat \Gamma }^{(j-1)}\right)^{-1}\right]^{-1}\\
    &= \left[{ \mathbf {I}_{L} \otimes \mathbf {B}^H} \mathbf{R}^{-1} {\mathbf B} + \mathbf {I}_{L} \otimes \left(\mathbf {\widehat \Gamma }^{(j-1)}\right)^{-1} \right]^{-1}\\
    &\approx \mathbf {I}_{L} \otimes \left[{ \mathbf {B}^H} \mathbf{R}^{-1} {\mathbf B} + \left(\mathbf {\widehat \Gamma }^{(j-1)}\right)^{-1} \right]^{-1} = \mathbf {I}_{L} \otimes {\boldsymbol{\Sigma }}^{(j)},
\end{aligned}
\end{equation}
where ${\boldsymbol{\Sigma }}^{(j)} = \left[{ \mathbf {B}^H} \mathbf{R}^{-1} {\mathbf B} + \left(\mathbf {\widehat \Gamma }^{(j-1)}\right)^{-1} \right]^{-1}$ and $\mathbf{R} = \sigma^2_\text{AWGN}\mathbf{I}_{N^P_z}$. At high SNR, the approximation above closely approaches equality. In a similar manner, the parameter $\boldsymbol {\mu}^{(j)}_{\mathbf{g}}$ can be estimated as
\begin{equation}
    \begin{aligned}
        &\boldsymbol {\mu}^{(j)}_{\mathbf{g}} = {\mathbf{\Sigma}}^{(j)}_{\mathbf{g}} \mathbf{\ddot B}^{H} {\mathbf {R}}_{P}^{-1} {\mathbf {\overline y}}^P_{\mathbf{g}} 
        = {\mathbf{\Sigma}}^{(j)}_{\mathbf{g}} \left[ \mathbf{B}^{H} {\mathbf {R}}^{-1} \otimes \mathbf {I}_{L} \right] {\mathbf {\overline y}}^P_{\mathbf{g}} \\
        & = \left[ {\mathbf{\Sigma}}^{(j)} \mathbf{B}^{H} {\mathbf {R}}^{-1} \otimes \mathbf {I}_{L} \right] {\mathbf {\overline y}}^P_{\mathbf{g}} 
        = \text{vec}\left[ {\mathbf{\Sigma}}^{(j)} \mathbf{B}^{H} {\mathbf {R}}^{-1} \mathbf{Y} \right] = \text{vec} \left[ \mathbf{\widehat{H}}\right],
    \end{aligned}
\end{equation}
where ${\mathbf {\overline y}}^P_{\mathbf{g}} = \text{vec}(\mathbf{Y})$, $\mathbf{Y} = \left[ \mathbf{\overline{y}}_{P,1}~\mathbf{\overline{y}}_{P,2}, \cdots, \mathbf{\overline{y}}_{P,L} \right] \in \mathbb{C}^{N^P_z \times L} $, and $\mathbf{\widehat{H}} = {\mathbf{\Sigma}}^{(j)} \mathbf{B}^{H} {\mathbf {R}}^{-1} \mathbf{Y}$. Thus, the LCGBL-assisted group sparse VLC channel is derived from the \textit{a posteriori} mean, i.e., $\mathbf{\widehat h}_{\mathbf{g},\text{LCGBL}} =\boldsymbol{\mu}^{(j)}_{\mathbf{g}}$ and the corresponding error covariance matrix $\mathbf{\Sigma}^{(j)}_{\mathbf{g}}$ represents the resultant estimation uncertainty. The estimated CTF is denoted as $\mathbf{\overline{h}}_{\mathbf{g},\text{LCGBL}} = \mathbf{\ddot Q}\mathbf{\widehat h}_{\mathbf{g},\text{LCGBL}}$. In this case, the estimated CFR of each frame is $\mathbf{\overline{h}}_{i,\text{LCGBL}} = \bigg \{ \mathbf{\overline{h}}_{\mathbf{g,\text{LCGBL}}} \left[ (i-1)L + n \right] \bigg \}^{M}_{n=1}$.\\ 

\subsection{Computational Complexity Analysis}
\textcolor{black}{In this subsection, we analyze the computational complexity of the proposed CE methods. The complexity of the GBL-based sparse CE scheme is on the order of $\mathcal{O}(L_h^{3}L^{3})$, which is dominated by an inversion of an $L_hL \times L_hL$ matrix. Likewise, the LCGBL-based sparse CE scheme incurs a complexity of $\mathcal{O}(L_h^{3})$ due to the inversion of an $L_h \times L_h$ matrix. In contrast, the worst-case complexity of the GOMP algorithm scales as $\mathcal{O}\big(L^{3}(N_z^{P})^{3}\big)$, since an intermediate LS estimate must be computed at each iteration. As also demonstrated in Section V, LCGBL attains NMSE performance close to that of GBL. Therefore, the proposed LCGBL method provides an attractive tradeoff, offering both competitive performance and reduced computational burden. For completeness, the conventional LMMSE-based CE approach also has complexity on the order of $\mathcal{O}(L_h^{3})$. The comprehensive derivations of the computational and space complexities associated with various schemes have been deferred to our technical report \cite{shubPAPR}.}

\begin{figure*}[t]
	\centering
        \captionsetup[subfigure]{justification=centering}
	\subfloat[]{\label{sh50}\includegraphics[width=45mm,height=44mm]{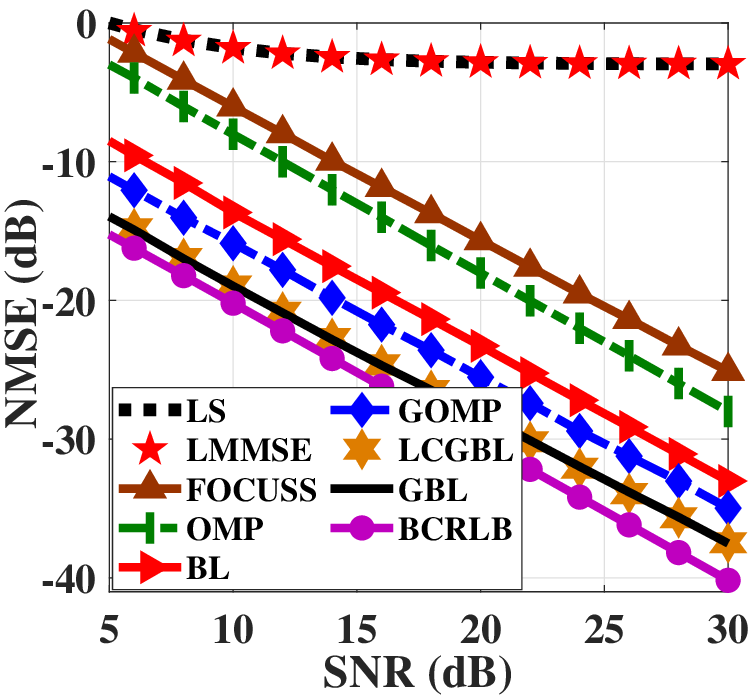}}
	\subfloat[]{\label{r12}\includegraphics[width=45mm,height=44mm]{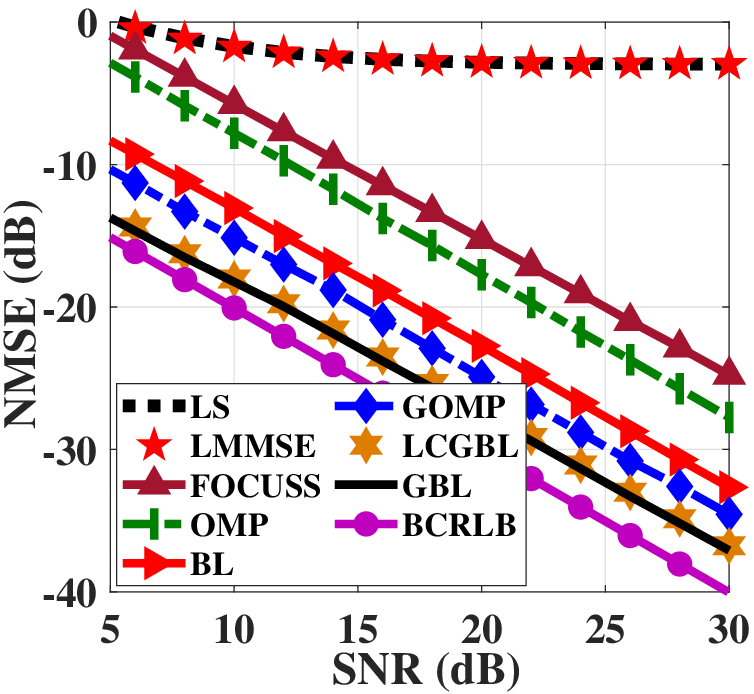}}
        \subfloat[]{\label{sh51}\includegraphics[width=45mm,height=44mm]{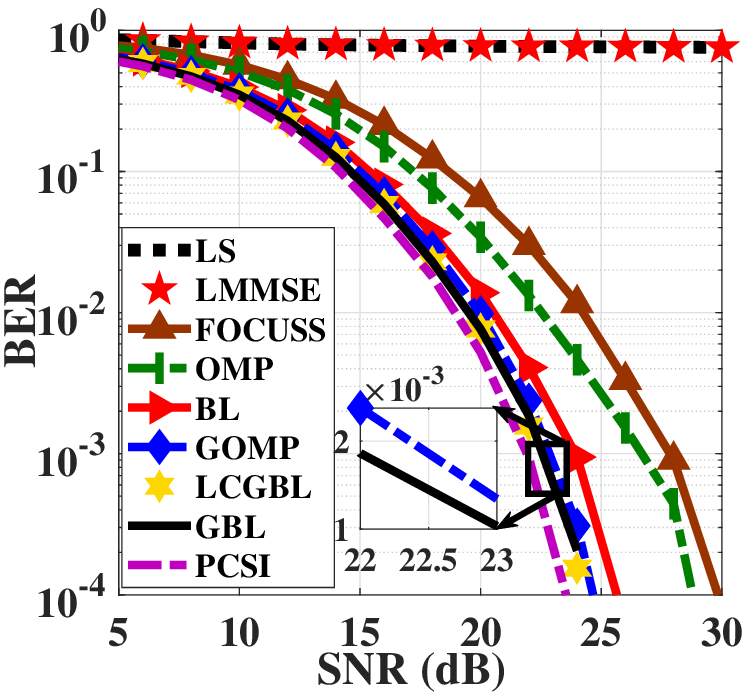}}
	\subfloat[]{\label{sh52}\includegraphics[width=45mm,height=44mm]{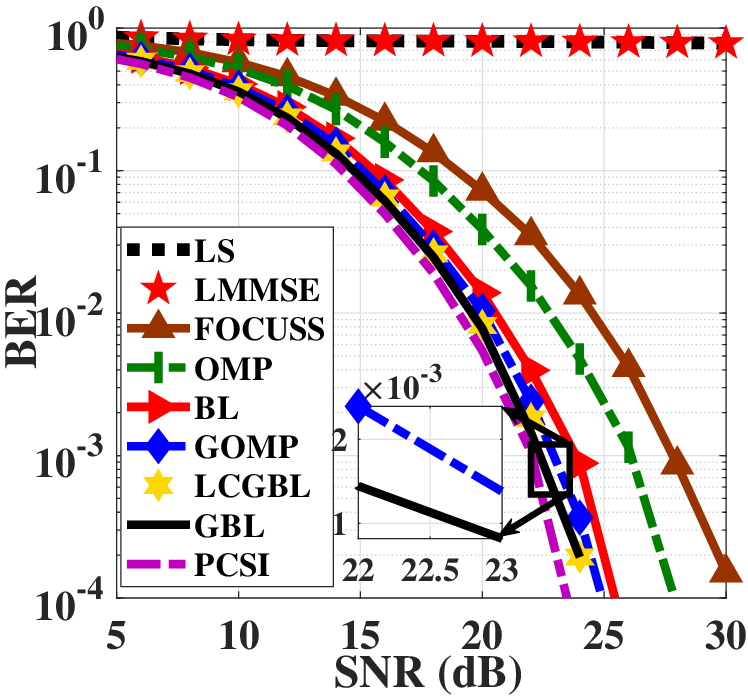}}
	\caption{MMV-based reduced-PAPR VLC system, utilizing $16$ QAM modulation, $L_h = 40$ and $L = 6$, (a) NMSE versus SNR performance in the DCO-OFDM system; (b) NMSE versus SNR performance in the ACO-OFDM system; \textcolor{black}{(c) BER versus SNR performance in the DCO-OFDM system; (d) BER versus SNR performance in the ACO-OFDM system.}}
	\label{Ab1}
\end{figure*}
\section{Simulation Results}
This section portrays our simulation results illustrating and comparing the efficacy of the proposed GBL-assisted technique, in comparison to that of GOMP \cite{srivastava2021bayesian} and SMV-based sparse SR schemes such as BL, OMP, and FOCUSS, \cite{shub,srivastava2021sparse}, as well as to the conventional LMMSE and LS schemes \cite{zhao2013channel}. Explicitly, we quantify the NMSE, BER, and OP for the CE-aided SR of the simultaneous group-sparse multipath O-OFDM-VLC CIR and FD O-OFDM symbols. The BER reflects the accuracy of detection achieved by the receivers relying on the CSI estimates, while the OP measures the likelihood that the highest attainable data rate is beneath the threshold $C_{\text{\textzeta}}$, i.e., we have $Pr(\log_2(\text{SNR}+1) \leq C_{\text{\textzeta}}) = \text{\textzeta}$ \cite{5226964}. As part of our simulations, \textcolor{black}{$\mathbf{\ddot{R}}_{h} = \left ({\mathbf {I}_{L} \otimes \mathbf{I}_{L_h}}\right)$, $\mathbf{R}_P = \sigma_{\text{AWGN}}^2 \mathbf{I}_{LN_z^P}$}, and the SNR in decibels (dB) is given as \textcolor{black}{SNR (dB)} = $10\log_{10}$\(\left(\displaystyle \frac{P_{\overline{x}}}{\sigma_{\text{AWGN}}^2}\right)\). Lastly, the NMSE is characterized by $\text{NMSE} = \frac{||\mathbf{\widehat{h}_g}-\mathbf{h_g}||^2_2}{||\mathbf{h_g} ||^2_2}$. The GBL's and BL's stopping criteria are as follows, $\epsilon = 10^{-6}\text{ and}\ m_{\max} = 50$, whereas that of the GOMP/OMP is set to $\text{\textctc} = 0.1$. In the context of FOCUSS, the noise variance is used as the regularization parameter, the stopping threshold is $10^{-5}$ with the $l_p$-norm parameter initialized to $p = 0.8$. Additionally, the maximum number of iterations is $800$. 
The system parameters are summarized in Table \ref{table2}, unless stated otherwise. 
\begin{table}[t]
\centering
\caption{\bf O-OFDM-VLC system design parameters}
\label{table2}
\resizebox{!}{!}{%
\begin{tabular}{|l|c|}\hline
\textbf{Parameter} & \textbf{Value} \\ \hline 
Room dimension & $5$ m $\times$ $5$ m $\times$ $3$ m \\
Modulation & DCO-OFDM, ACO-OFDM \\
DC bias & $7$ dB \\
Over-sampling factor & $4$ \\
No. of pilot subcarriers ($N_z^P$) & $32$ \\
No. of data subcarriers ($N_z^D$) & $32$ \\
CIR order (L$_h$) & $40$  \\ 
Outage capacity $(C_{\text{\textzeta}})$ & $5$ bps/Hz \\ 
No. of dominant paths & $6$   \\
No. of frames ($L$) & $6$ \\
Compression factor ($\alpha$) & $31/32$ \\ \hline
\end{tabular}%
}
\end{table}

Fig. \ref{Ab1}\subref{sh50} contrasts the NMSE performance of our GBL-based technique to alternative approaches in the context of our DCO-OFDM system. The GBL approach achieves significantly lower NMSE than the GOMP, standard SMV-based BL, OMP, FOCUSS, LMMSE, and LS methods. This improvement stems from the GBL's capability of exploiting the concatenated multipath VLC CIR's simultaneous group-sparsity across all the measurement vectors, overcoming the limitations of the OMP and BL. The GOMP and OMP methods exhibit inferior performance due to their sensitivity to the stopping parameter (set to $\text{\textctc} = 0.1$) and dictionary matrix. However, the proposed LCGBL scheme demonstrates comparable efficiency to GBL, as its NMSE performance is nearly identical to that of GBL, despite its substantially reduced computational cost. Performance comparisons with the MMV-based FOCUSS also show lower NMSE than the GBL due to user-defined parameters and convergence issues. The conventional LS and LMMSE methods exhibit lower performance due to their inability to exploit the inherent sparsity in the VLC system's multipath CIR. By contrast, the GBL-assisted sparse estimation successfully leverages sparsity through EM-assisted hyperparameter estimation, ensuring robust convergence without requiring any tuning or regularization of the parameters. Additionally, the dictionary matrix selection does not affect GBL's performance. Furthermore, NMSE comparisons with the appropriately normalized BCRLB exhibit similarity for the NMSE achieved by the GBL and LCGBL to the BCRLB, confirming the effectiveness of the GBL and LCGBL methods.
\begin{figure*}[t]
	\centering
        \captionsetup[subfigure]{justification=centering}
	\subfloat[]{\label{sh53}\includegraphics[width=45mm,height=44mm]{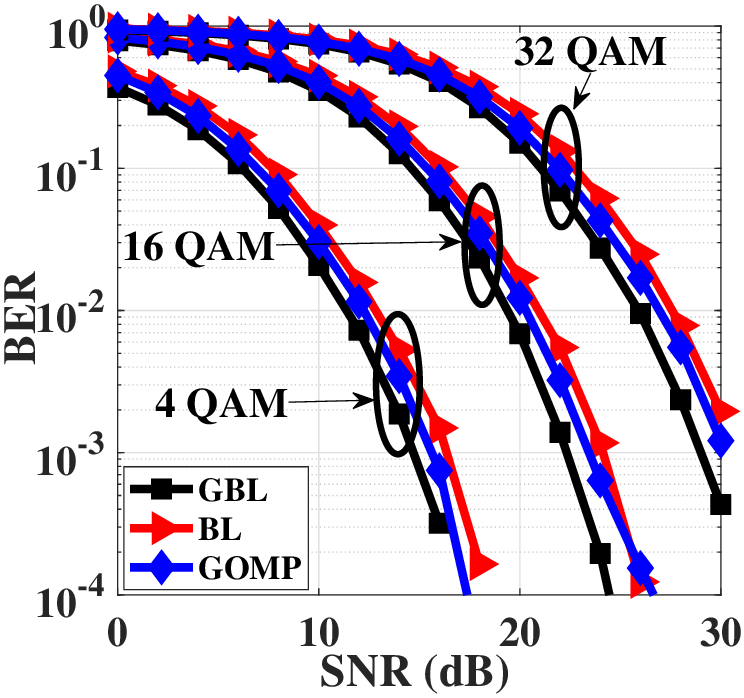}}
	\subfloat[]{\label{sh54}\includegraphics[width=45mm,height=44mm]{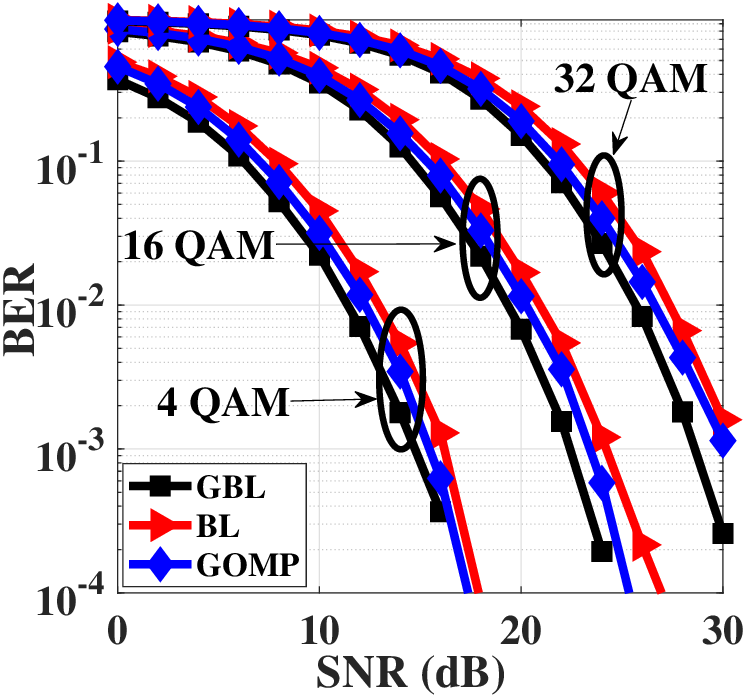}}
        \subfloat[]{\label{sh55}\includegraphics[width=45mm,height=44mm]{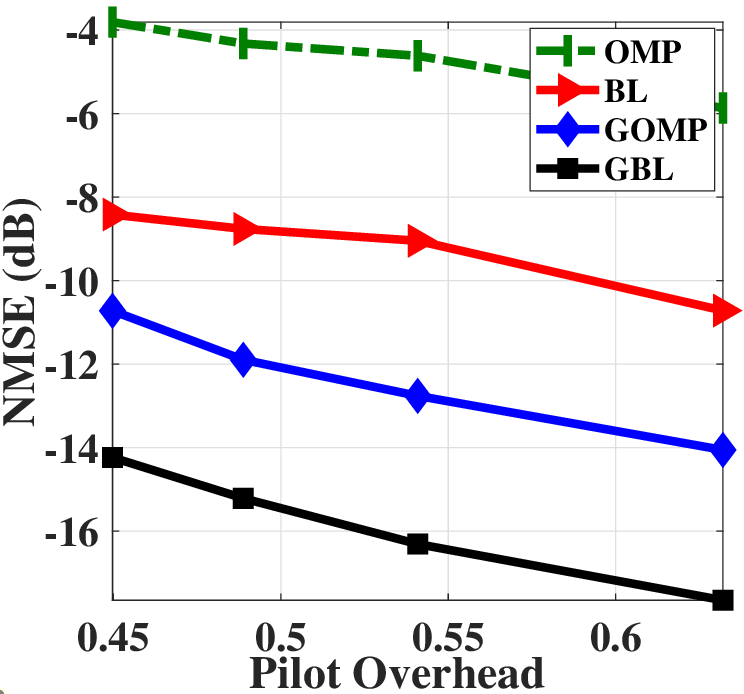}}
	\subfloat[]{\label{sh56}\includegraphics[width=45mm,height=44mm]{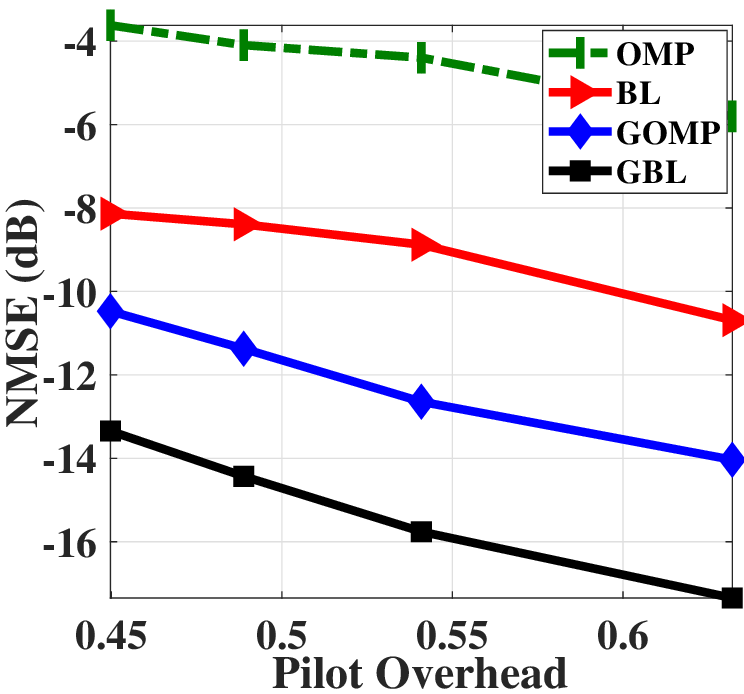}}
	\caption{MMV-based reduced-PAPR VLC system utilizing $L_h = 40$ and $L = 6$, (a) BER versus SNR performance in the DCO-OFDM system; (b) BER versus SNR performance in the ACO-OFDM system; (c) NMSE versus pilot overhead performance for $16$ QAM modulation and SNR $= 10$ dB in DCO-OFDM system; (d) NMSE versus pilot overhead performance for $16$ QAM modulation for SNR $= 10$ dB in the ACO-OFDM system.}
	\label{R7}
\end{figure*}
\begin{figure*}[t]
	\centering
        \captionsetup[subfigure]{justification=centering}
	\subfloat[]{\label{sh59}\includegraphics[width=45mm,height=44mm]{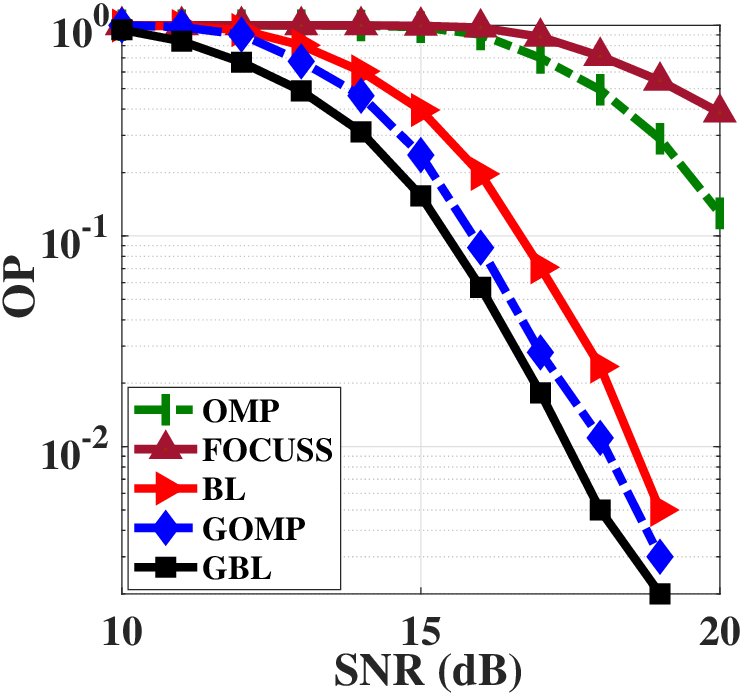}}
	\subfloat[]{\label{sh60}\includegraphics[width=45mm,height=44mm]{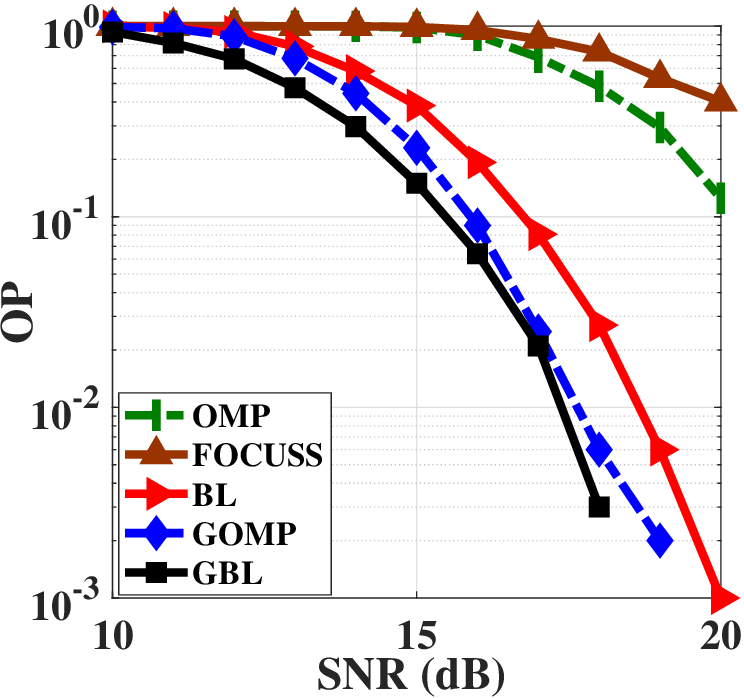}}
        \subfloat[]{\label{F1}\includegraphics[width=45mm,height=44mm]{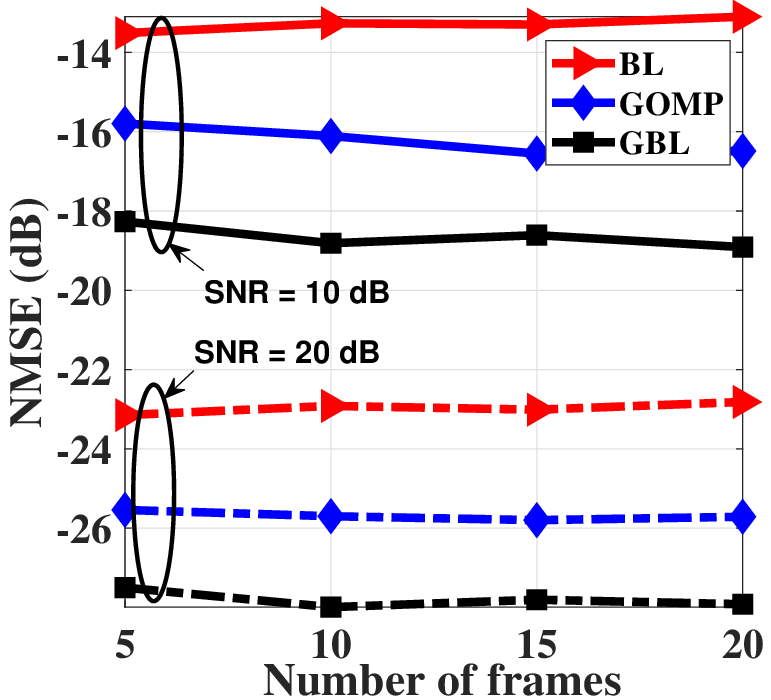}}
	\subfloat[]{\label{F2}\includegraphics[width=45mm,height=44mm]{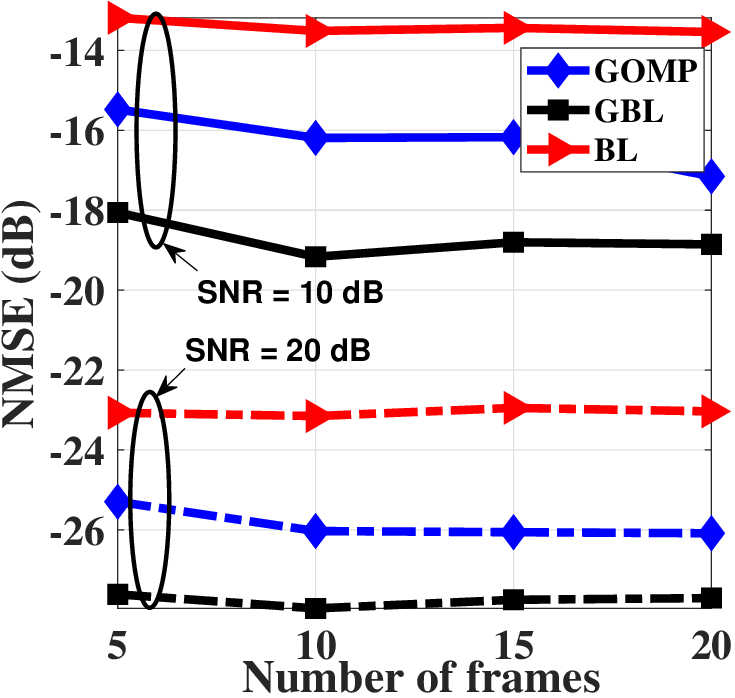}}
	\caption{MMV-based reduced-PAPR VLC system with $16$ QAM modulation and $L_h = 40$, (a) OP versus SNR performance for $L = 6$ in the DCO-OFDM system; (b) OP versus SNR performance for $L = 6$ in the ACO-OFDM system; (c) NMSE versus the number of training frames in the DCO-OFDM; (d) NMSE versus the number of training frames in the ACO-OFDM system.}
	\label{R10}
\end{figure*}
\begin{figure*}[t]
	\centering
        \captionsetup[subfigure]{justification=centering}
    \subfloat[]{\label{1mn}\includegraphics[width=50mm,height=45mm]{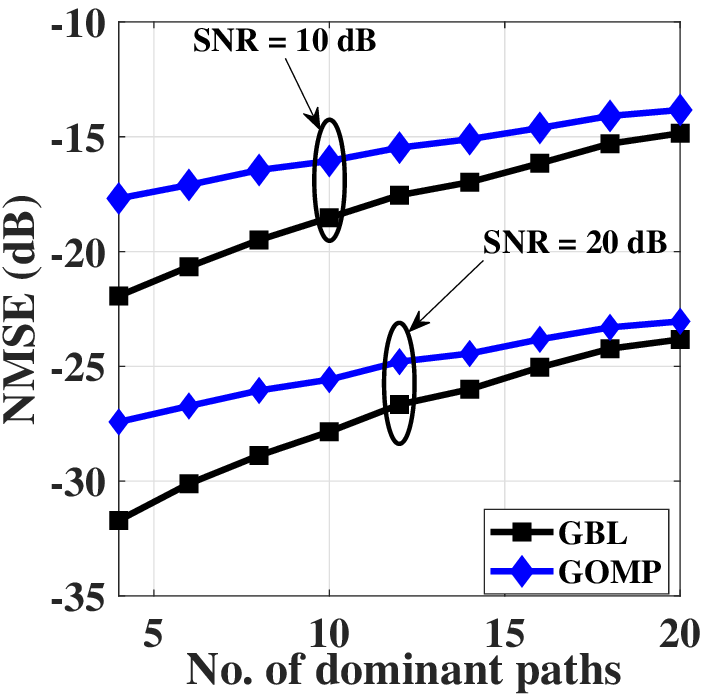}}\hspace{5mm}
	\subfloat[]{\label{rev1}\includegraphics[width=50mm,height=45mm]{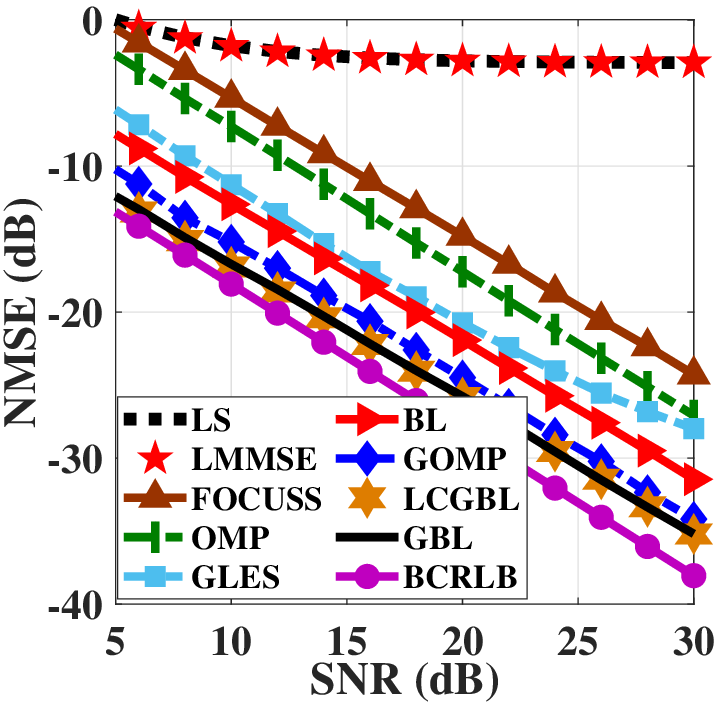}}\hspace{5mm}
	\subfloat[]{\label{rev2}\includegraphics[width=50mm,height=45mm]{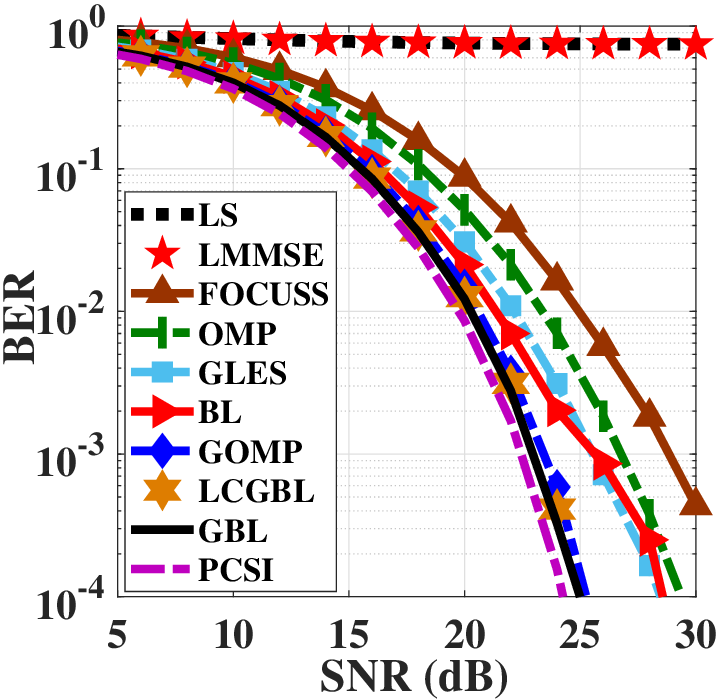}}
	\caption{\textcolor{black}{MMV-based reduced-PAPR DCO-OFDM VLC system, (a) NMSE versus number of dominant paths performance, utilizing $4$ QAM modulation, $L_h = 40$ and $L = 6$; (b) NMSE versus SNR performance, utilizing $16$ QAM modulation, $L_h = 45$ and $L = 10$; (c) BER versus SNR performance, utilizing $16$ QAM modulation, $L_h = 45$ and $L = 10$.}}
	\label{1}
\end{figure*}
Fig. \ref{Ab1}\subref{r12} depicts a comparison of the NMSE performance to that of the various estimation techniques employed in the ACO-OFDM system. The GBL and LCGBL approaches produced an overall improved result in this instance as well, following a similar trend. 


\textcolor{black}{The BER performance of the considered techniques for both DCO-OFDM and ACO-OFDM is shown in Fig. \ref{Ab1}\subref{sh51} and Fig. \ref{Ab1}\subref{sh52}, respectively. As observed, the proposed GBL-assisted CE scheme achieves a consistently lower BER than the competing techniques owing to its superior estimation accuracy. To highlight this performance difference more clearly, zoomed versions of the BER plots have also been incorporated. Moreover, the gain can be quantified in terms of the SNR reduction required to attain the same target BER. For example, in Fig. \ref{Ab1}\subref{sh51}, at a BER of $10^{-2}$, the proposed GBL-based receiver requires approximately $0.5$ dB lower SNR than the GOMP-based receiver. A similar trend is observed for LCGBL, which also achieves lower BER than GOMP while maintaining significantly reduced complexity. Furthermore, the BER achieved by the proposed GBL-based receiver remains close to that of the receiver with perfect CSI, thereby highlighting its strong capability for reliable signal recovery.}

In the DCO-OFDM system, Fig. \ref{R7}\subref{sh53} evaluates the BER efficacy of the GBL-assisted method against the GOMP and BL techniques for various QAM schemes, namely for $4, 16,$ and $32$. Regardless of the modulation order, it is clear that the proposed BL-based scheme outperforms the other methods. A comparable BER performance trend is observed for the ACO-OFDM system, as shown in \ref{R7}\subref{sh54}.

Significantly, the CSI estimation framework based on the GBL methodology transmits $LN^P_z$ pilot symbols for $L$ OFDM blocks. Consequently, the pilot overhead normalization factor is calculated as $\rho = \frac{N^P_z}{N^P_z+N^D_z}$, where $N^P_z$ takes values of $22$, $32$, $42$, and $52$. Fig. \ref{R7}\subref{sh55} and Fig. \ref{R7}\subref{sh56} illustrate the NMSE performance for several sparse CSI estimation techniques for both the DCO-OFDM and ACO-OFDM systems as a consequence of varying $\rho$. It is noticeable that as the pilot overhead $\rho$ increases, the NMSE performance improves for all the contending methods. Additionally, for the DCO-OFDM and ACO-OFDM systems, the NMSE associated with the proposed GBL method at $\rho = 0.45$ outperforms that of GOMP at $\rho = 0.55$. This validates that the GBL-based estimation approach leads to a successful reduction in $\rho$, while satisfying an established threshold of NMSE, thus highlighting its enhanced bandwidth efficiency.

In Fig. \ref{R10}\subref{sh59}, we illustrate a comparative analysis of the GBL with the other competing approaches used in the DCO-OFDM system. The results illustrated in Fig. \ref{R10}\subref{sh59} suggest that the OP of the proposed GBL scheme is lower than that of the aforementioned approaches. This outcome can be ascribed to the superior estimation accuracy of the GBL-based method when contrasted to the other CE approaches mentioned above. Fig. \ref{R10}\subref{sh60} depicts the OP performance for the ACO-OFDM system, which reaffirms the hypothesis that the GBL-based scheme excels over the conventional CE and sparse estimation methods.

Fig. \ref{R10}\subref{F1} illustrates the NMSE performance of the proposed and existing methods for our MMV-based DCO-OFDM system. The comparison is conducted by varying the number of training frames $L$ from $5$ to $20$. It is noteworthy that the GBL scheme with as few as $L = 5$ frames achieves a performance level comparable to that of the GOMP scheme having $L=20$ training frames, demonstrating the robustness of the former. The NMSE performance associated with varying $L$ in the case of the ACO-OFDM VLC system is shown in Fig. \ref{R10}\subref{F2}, demonstrating once again that the GBL-based scheme outperforms the GOMP technique. Therefore, the GBL has the potential to reduce the training overhead in both the O-OFDM systems while maintaining a desired level of estimation accuracy. 

\begin{figure*}[h]
	\centering
        \captionsetup[subfigure]{justification=centering}
	\subfloat[]{\label{sh581}\includegraphics[width=48mm,height=45mm]{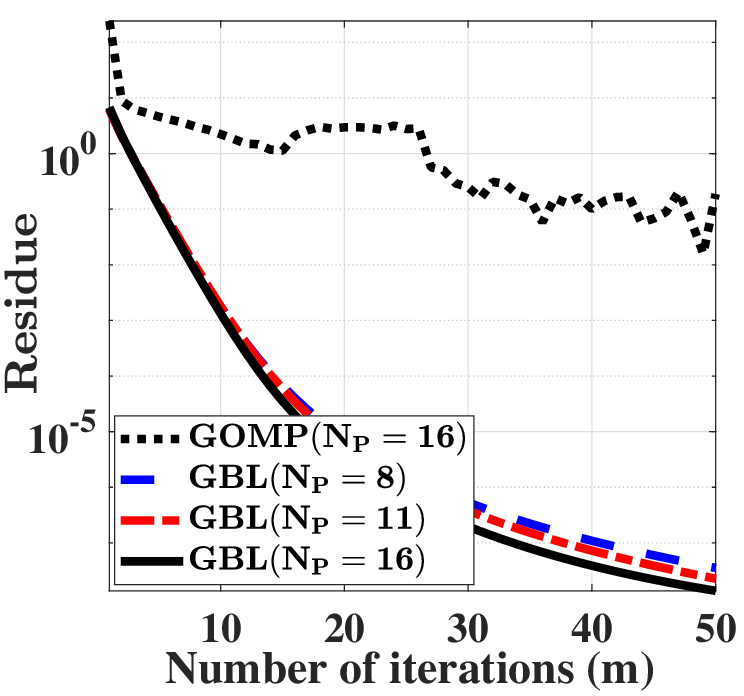}}\hspace{3mm}
    \captionsetup[subfigure]{justification=centering}
	\subfloat[]{\label{sh5811}\includegraphics[width=60mm,height=48mm]{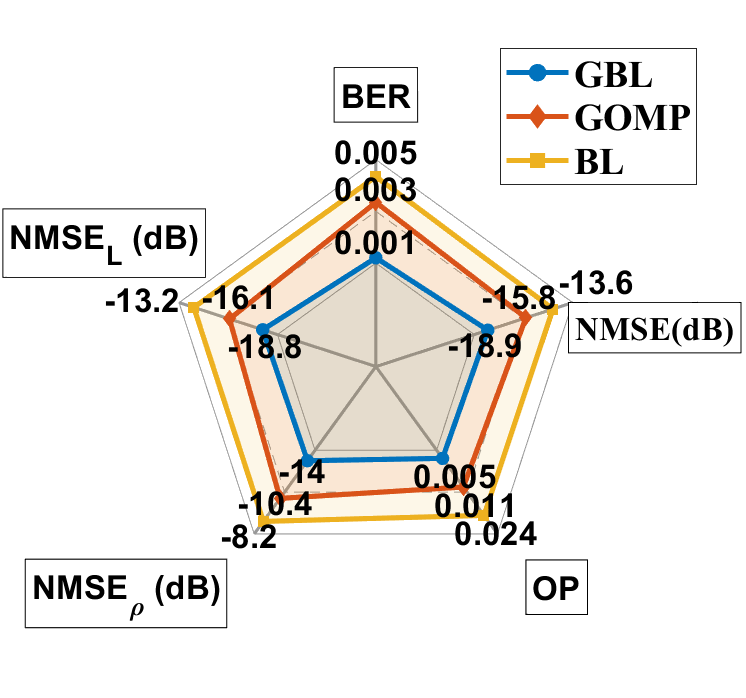}}\hspace{3mm}
 	\subfloat[]{\label{sh582}\includegraphics[width=60mm,height=48mm]{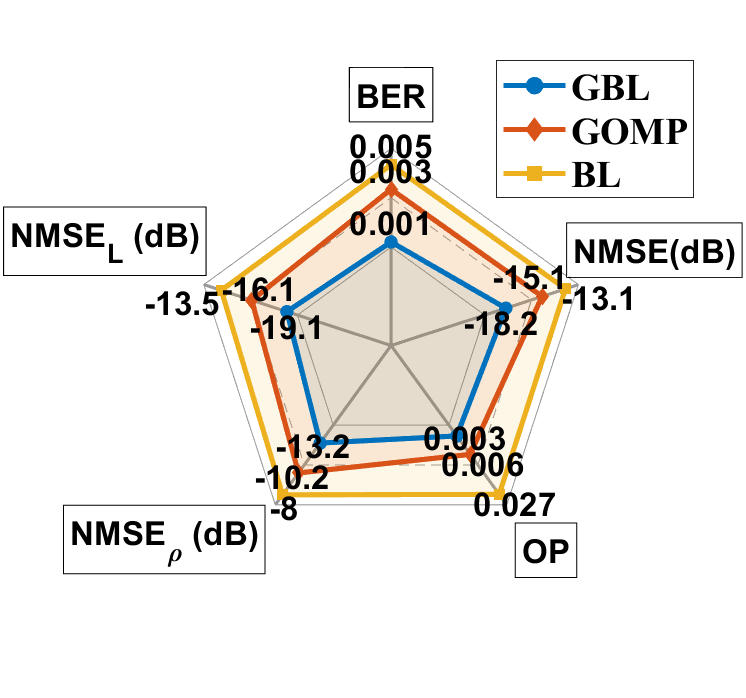}}
	\caption{(a) Residue versus number of iterations (m) for $L_h = 40,~\text{SNR}=5~\text{dB}$ and $16$ QAM modulation; (b) Stylized performance comparison of the reduced-PAPR VLC system for DCO-OFDM; (c) Stylized performance comparison of the reduced-PAPR VLC system for ACO-OFDM.}
	\label{conv}
\end{figure*}

\textcolor{black}{Fig. \ref{1}\subref{1mn} presents the NMSE performance of the MMV-based reduced-PAPR DCO-OFDM VLC system under varying sparsity conditions, characterized by different numbers of dominant paths in the VLC CIR, ranging from $4$ to $20$, for the proposed GBL- and GOMP-based CE techniques. As the number of dominant paths increases, both methods exhibit a degradation in NMSE performance owing to the reduction in channel sparsity. Nevertheless, the proposed GBL-based method consistently outperforms the GOMP-based counterpart across all considered sparsity levels. Notably, the GBL approach achieves an NMSE comparable to that of the GOMP technique with only $4$ dominant paths, even when the number of dominant paths is increased to $20$. This demonstrates the stronger robustness of the proposed GBL method under a broad range of VLC channel conditions.}

\textcolor{black}{To further demonstrate the advantages of the proposed method, we evaluated the advocated GBL-based approach under different simulation settings, where the number of dominant paths is $8$, the modulation scheme is $16$-QAM, $L=10$, and $L_h=45$. Figure \ref{1}\subref{rev1} compares the NMSE performance of the considered techniques, including the group LASSO with effective support (GLES) method, for the DCO-OFDM VLC system. The regularization parameter of the GLES scheme was empirically tuned to obtain its best NMSE performance. Since the performance of GLES is highly sensitive to the selection of this user-defined regularization parameter, its NMSE remains inferior to that achieved by the proposed GBL-based method. It is therefore evident that the proposed GBL- and LCGBL-based approaches outperform the other considered techniques. In addition, Fig. \ref{1}\subref{rev2} presents the BER performance. It can be observed that the BER attained by the proposed GBL- and LCGBL-based CE techniques approaches that of the receiver with PCSI, while remaining lower than that of the GOMP- and BL-based counterparts due to their improved estimation accuracy.}


Fig. \ref{conv}\subref{sh581} depicts the convergence of our proposed GBL and GOMP-based technique as a function of the number of EM iterations for various values of the number of pilot symbols $N_P$. This represents the number of EM iterations required for the convergence of the hyperparameters. For a fixed value of the convergence parameter $\epsilon = 10^{-6}$, it is evident that the number of iterations required for convergence decreases as $N_P$ increases. Furthermore, one can conclude that our GBL-based technique converges faster than the non-Bayesian technique, such as GOMP.

Finally, a comprehensive performance visualization of the three techniques, namely GBL, GOMP, and conventional BL, with respect to the DCO-OFDM, has been presented in Fig. \ref{conv}\subref{sh5811}, with regard to several metrics such as the BER, NMSE, pilot overhead (NMSE$_{\rho}$), number of frames (NMSE$_{L}$), and OP for the VLC system.
In this investigation, the OP has been examined at an SNR of $18$ dB, NMSE and BER have been evaluated at an SNR of $10$ dB and $22$ dB, NMSE$_{\rho}$ has been assessed at $\rho = 0.45$, and NMSE$_{L}$ is computed at $L = 10$ and SNR $= 10$ dB. According to the results, our GBL strategy offers the optimum performance in terms of the OP, BER, and NMSE metrics for a particular SNR. Additionally, the GBL-based technique produces the best NMSE for the chosen values of parameters of $\rho$ and $L$. As depicted in Fig. \ref{conv}\subref{sh5811}, the proposed GBL approach occupies the smallest area within the polygon. This indicates that it offers the lowest NMSE, OP, BER, NMSE$_{L}$, and NMSE$_{\rho}$ values, making it the most promising choice for the estimation of CIR in DCO-OFDM-based VLC systems. A comparable in-depth evaluation of the performance is undertaken regarding the ACO-OFDM modulation in Fig. \ref{conv}\subref{sh582}. The thresholds are identical to those of Fig. \ref{conv}\subref{sh5811}, and the trends are also similar. As stated previously, the GBL-based technique possesses the smallest region of the polygon, indicating its preeminent performance. Consequently, for the VLC system, the GBL-assisted sparse recovery method advocated is perfectly suited for the O-OFDM system, since it enhances the quality of CSI estimation, resulting in overall superior performance in terms of the NMSE, BER, NMSE$_{L}$, and OP metrics, together with a low pilot overhead.
\section{Summary and Conclusion}
A novel GBL-based technique has been conceived for the estimation of group-sparse multipath CIR and SR of the FD O-OFDM symbols in an IM/DD-based VLC system. The methods proposed for CIR and SR are based on the MMV-based sparse wideband VLC CIR model that captures the NLoS and LoS paths, and a reduced-PAPR sparse FD O-OFDM symbol model, respectively. Our technique exploited the sparsity, which resulted in benefits in comparison to the conventional CE methods, such as LMMSE and LS. While GOMP techniques were also presented for sparse CIR and SR, they suffered from sensitivity to the choice of the dictionary matrix and error propagation. The GBL also led to faster MMV-based multipath VLC CIR estimation, addressing the slow convergence of existing algorithms. This improvement arises from the GBL’s ability to exploit the concatenated multipath VLC CIR’s simultaneous group-sparsity across all measurement vectors. Furthermore, a low-complexity variant of GBL, termed LCGBL, has been developed to significantly reduce its computational cost. The proposed method was then harnessed in ACO-OFDM and DCO-OFDM systems, and it led to a substantial reduction in the pilot overhead. Our results clearly evidenced the superior performance of GBL over the existing estimation methods, such as BL, GOMP, OMP, FOCUSS, LS, and LMMSE across a comprehensive set of performance metrics, including the OP, MSE, pilot overhead, and BER. The NMSE of the GBL was also shown to closely match the BCRLB performance bound, highlighting the practical advantages inherent in deploying the algorithms devised. \textcolor{black}{A systematic comparison with recent DL-based sparse estimation methods for VLC is an interesting and meaningful direction for future work.}


\vspace{-1mm}
\section*{Appendix: Proof of Theorem $1$}
The expectation-step (E-step) assesses $\mathbf {\mathcal {L}}\left ({\boldsymbol{\Gamma }| \widehat {\boldsymbol{\Gamma }}^{(j-1)}}\right)$ as
\begin{align} \label{App1}
\mathbb {E} \bigg \{ \log \left [{ f\left ({ {\mathbf {h_g}}; \boldsymbol{\Gamma } }\right) }\right] \bigg \} + \mathbb {E} \bigg \{ \log \left [{ f\left ( {\mathbf {\overline{y}}}^P_{\mathbf{g}} | \mathbf{h_g} \right) }\right] \bigg \}.
\end{align}
It is inferred from (\ref{App1}) that the log-likelihood term $\log \left [{ f\left ( {\mathbf {\overline{y}}}^P_{\mathbf{g}} | \mathbf{h_g} \right) }\right]$ is independent of $\boldsymbol{\Gamma }$. Therefore, it is permissible to disregard the initial term of (\ref{App1}) during the subsequent maximization step (M-step). Consequently, the M-step maximizes $\mathbf {\mathcal {L}}\left ({\boldsymbol{\Gamma }| \widehat {\boldsymbol{\Gamma }}^{(j-1)}}\right)$ relative to $\boldsymbol \Gamma$ as
\begin{align} \label{App2}
\widehat {\boldsymbol \Gamma }^{(j)}&=\arg \max _{\boldsymbol \Gamma } \mathbb {E} \bigg \{ \log \left [{ f\left ({ {\mathbf {h_g}}; \boldsymbol{\Gamma } }\right) }\right] \bigg \} \\
&=\arg \max _{\boldsymbol \Gamma } \sum \limits _{k=0}^{L_h-1} \left[{-L\log (\gamma _{k}) - \sum _{l=1}^{L} \displaystyle \frac {\mathbb {E} \big \{ \vert \mathbf {h}_{l}(k) \vert ^{2} \big \}}{\gamma _{k}} }\right].
\end{align}
The hyperparameter estimates $\widehat {\gamma }_{k}^{(j)}$ can be derived by differentiating (\ref{App2}) with respect to ${\gamma }_{k}$ and equating it to zero yielding:
\begin{equation} \label{App5}
\widehat {\gamma }_{k}^{(j)}= \frac {1}{L} \sum _{l=1}^{L} \mathbb {E}_{  {\mathbf {h_g}} \vert {\mathbf {\overline y}}^P_{\mathbf{g}}; \widehat {\boldsymbol{\Gamma }}^{(j-1)}} \bigg \{ \vert \mathbf {h}_{l}(k) \vert ^{2} \bigg \}. 
\end{equation}
To determine the conditional expectation $\mathbb {E}_{{{\mathbf {h_g}} | {\mathbf {\overline y}}^P_{\mathbf{g}}; \widehat {\boldsymbol{\Gamma }}^{(j-1)}}} \big \{ {\cdot } \big \}$, we utilize the \textit{a posteriori} pdf $f\biggl( {\mathbf {h_g}} \vert {\mathbf {\overline y}}^P_{\mathbf{g}}; \widehat {\boldsymbol{\Gamma }}^{(j-1)} \biggl)$ of $\mathbf{h_g}$, that is formulated as 
\begin{equation}\label{App4}
f\left( {\mathbf {h_g}} \vert {\mathbf {\overline y}}^P_{\mathbf{g}}; \widehat {\boldsymbol{\Gamma }}^{(j-1)} \right) = \mathcal {CN}\left (\boldsymbol{\mu}^{(j)}_{\mathbf{g}}, {\boldsymbol{\Sigma }}_{\mathbf{g}}^{(j)}\right).
\end{equation}
By substituting $\boldsymbol{\Gamma } = \widehat {\boldsymbol{\Gamma }}^{(j-1)}$ into the MMSE error covariance matrix $\boldsymbol{\Sigma}_{\mathbf{g}}$ of (\ref{App3}), one can obtain the \textit{a posteriori} covariance matrix $\boldsymbol{\Sigma}^{(j)}_{\mathbf{g}} \in \mathbb{C}^{LL_h \times LL_h}$ in (\ref{App4}). Similarly, the \textit{a posteriori} mean $\boldsymbol{\mu}^{(j)}_{\mathbf{g}}\in \mathbb{C}^{LL_h \times 1}$ of (\ref{App4}) can be obtained by setting $\boldsymbol{\Sigma}_{\mathbf{g}} = \boldsymbol{\Sigma}^{(j)}_{\mathbf{g}}$ in the expression of the estimate of the mean $\boldsymbol{\mu}_{\mathbf{g}}$ in (\ref{App3}). Thus, the hyperparameter update $\widehat {\gamma }_{k}^{(j)}$ of (\ref{App6}) can be obtained by using the conditional expectation in (\ref{App5}), which is given as
\begin{equation}
\begin{aligned}
\mathbb {E} \bigg \{ \vert \mathbf {h}_{l}(k) \vert ^{2} \bigg \} &= \boldsymbol{\Sigma}^{(j)}_{\mathbf{g}} \big [k+(l-1)L,k+(l-1)L \big] \\ 
&+\, \left \vert{ \boldsymbol {\mu}^{(j)}_{\mathbf{g}}\big [k+(l-1)L \big] }\right \vert^{2}.
\end{aligned}
\end{equation} 


\bibliographystyle{IEEEtran} 
\bibliography{citation}


 






\end{document}